\def\R{{ \mathbb{R}}}
\def\E{{ \mathbb{E}}}
\def\arg{\mathop{\rm arg}}
\newtheorem{theorem}{\bf{Theorem}}
\newtheorem{lemma}{\bf{Lemma}}
\newtheorem{corol}{\bf{Corollary}}
\begin{document}




\title{Batch mode active learning for efficient parameter estimation}


\author{Wei Zheng\textsuperscript{1}, Ting Tian\textsuperscript{2}, and Xueqin Wang\textsuperscript{3}\\
\textsuperscript{1}Department of Business Analytics and Statistics, University of Tennessee\\
\textsuperscript{2}School of Mathematics, Sun Yat-sen University\\
\textsuperscript{3}School of Management, University of Science and Technology of China}
\date{}

\maketitle

\abstract{
For many tasks of data analysis, we may only have the information of the explanatory variable and the evaluation of the response values are quite expensive.  While it is impractical or too costly to obtain the responses of all units, a natural remedy is to judiciously select a good sample of units, for which the responses are to be evaluated. In this paper, we adopt the classical criteria in design of experiments to quantify the information of a given sample regarding parameter estimation. Then, we provide a theoretical justification for approximating the optimal sample problem by a continuous problem, for which fast algorithms can be further developed with the guarantee of global convergence. Our results have the following novelties: $(i)$ The statistical efficiency of any candidate sample can be evaluated without knowing the exact optimal sample; $(ii)$ It can be applied to a very wide class of statistical models; $(iii)$ It can be integrated with a broad class of information criteria; $(iv)$ It is much faster than existing algorithms. $(v)$ A geometric interpretation is adopted to theoretically justify the relaxation of the original combinatorial problem to continuous optimization problem.
}


Keywords: Optimal design; active learning; sample; equivalence theorem; precision medicine


%

\section{Introduction.}\label{intro} 
For many tasks of data analysis, we may only have the information of the explanatory variable while the response values are unavailable and expensive to obtain. As a result, it is impractical or too costly to obtain the responses of all units. With limited budget, we have to judiciously select a good sample of the units to obtain the response and analyse this sample in place of the full data. This problem has been extensively discussed in the machine learning community by the name of active learning. See \citet{settles2009active} and \citet{ren2021survey} for comprehensive surveys. Relatively, the study of sampling methods targeted on statistical inferences is quite limited in literature. \citet{deng2009active} studied the detection of money laundry activity by using logistic regression model to select the informative bank customer in a sequential procedure. In each step, one unit is selected from several pre-screened candidate units according to the D-criterion and added to the sample. The estimate of the parameters are correspondingly updated, based on which the next point will be selected. This sequential process may not be practical when the evaluation of the response is time consuming. Additionally, it fails to take the information overlap between the multiple data points into account.  In such situations, we may need to select a big batch of points simultaneously and the responses of the selected units are measured simultaneously, especially when a parallel system is available. In \citet{deng2009active}'s context, we shall confirm the money laundry activity for many bank customers simultaneously. Another disadvantage of the sequential approach is that it does not lead to the global optimization due to information overlap among successively selected units.

The problem is formally stated as follows. Let ${\cal X}=\{x_1,x_2,...,x_N\}$ be a set of all $N$ design points, and our task is to select a sample $S\subset \{1,2,...,N\}$ of a given size $|S|=n<N$ to obtain the response $\{y_i: i\in S\}$. Then we shall use the data $\{(x_i,y_i):i\in S\}$ to train a parametric model $p(y|x,\theta)$, where $\theta \in \R^k$ is the parameter of the distribution. The information matrix for $\theta$ based on $x$ is $M_x=-\E \partial^2 log(p(y|x,\theta))/\partial \theta^2\in \R^{k\times k}$. When the responses are independent and homoscedastic condition on the explanatory variable, the information matrix of the sample $S$ is simply $M(S)=\sum_{i\in S}M_{x_i}$. That means the asymptotic covariance matrix of the maximum likelihood estimator (MLE), $\hat{\theta}$, is proportional to $M(S)^{-1}$. For a fixed sample size, say $n$, we would like to choose the optimal sample which maximizes $M(S)$ in some sense. Suppose we are interested in $g(\theta)\in \R^q$, $1\leq q\leq k$, where $g$ is a differential function, the asymptotic covariance matrix of its MLE, $g(\hat{\theta})$, is given by
\begin{eqnarray*}
\Sigma_g(S)&=&\frac{\partial g(\theta)}{\partial \theta^T}M(S)^{-1}\left(\frac{\partial g(\theta)}{\partial \theta^T}\right)^T.
\end{eqnarray*}
In the design context, the covariance matrix $\Sigma_g(S)$ can be minimized under different criteria. For example, the criterion functions $tr(\Sigma_g(S))$, $|\Sigma_g(S)|^{1/q}$ and $\lambda_{max}(\Sigma_g(S))$ corresponds to the A, D and E criteria, respectively. Here, $tr(\cdot)$ denotes the trace of a matrix; $|\cdot|$ denotes the determinant of a matrix; and $\lambda_{max}(\cdot)$ denotes the largest eigenvalue of a matrix. In an effort to unify these criteria, \citet{kiefer1974general} defined the $\Phi_p$ function:
\begin{eqnarray}\label{eqn:1253}
\Phi_p(\Sigma_g(S))=\left(q^{-1}Tr\left[\Sigma_g(S)\right]^p\right)^{1/p},&&0\leq p<\infty.
\end{eqnarray}
A sample is said to be $\Phi_p$-optimal if it solves the following problem
\begin{eqnarray}\label{eqn:821}
\min_{S:|S|=n}\Phi_p(\Sigma_g(S)).
\end{eqnarray}
By direct calculations, one can show that
\begin{eqnarray*}
\lim_{p\rightarrow 0}\Phi_p(\Sigma_g(S))&=&|\Sigma_g(S)|^{1/q}.
\end{eqnarray*}
Hence, the $\Phi_p$ criterion reduces to D-criterion when $p=0$. Similarly, it reduces to the A and E criteria when $p$ takes the values of $1$ and $\infty$, respectively.

The combinatorial problem (\ref{eqn:821}) is generally N-P hard, and the aim of this paper is to develop a method of deriving a near-optimal solution in a fast manner. The method shall also permit the evaluation of the lower bound of the efficiency of any candidate solution. The rest of the paper is organized as follows. Section \ref{sec:geo} relaxes (\ref{eqn:821}) to a continuous optimization problem and provides a geometric interpretation of the connection between the two problems under the D-criterion. Moreover, two forms of optimality conditions are provided. Correspondingly, an algorithm is proposed for D-criterion. Section \ref{phi_p} generalizes the algorithm to the general $\Phi_p$-criterion and proves the convergence of the algorithm. Section \ref{sec:example} provides some numerical examples to illustrate that it runs a lot faster than existing algorithms.  

\section{Literature review}
Given a large pool of unlabeled unit, active learning provides a way to iteratively select the most informative unlabeled unites for the queries to label. Most work on active learning have focused on selecting a single unlabeled unit at a time. Among them, one main category is to measure the informativeness of each unlabeled point. \citet{zhang2000value} selects the unit that provides the most information for the prediction of unlabelled data. \citet{houlsby2011bayesian} selects points that maximally decreases the expected posterior entropy of the model parameters. They achieve this by adopting \citet{sebastiani2000maximum}'s formulation on the equivalence of entropy changes on the responses and parameters when a new unit is added to the data. This is helpful to handle high dimensional parameters and heteroscedastic cases such as the Guassian process. For other ideas, \citet{lewis1994sequential} selects the unit with conditional probability closest to $1/2$. \citet{freund1997selective} selects the unit on which a committee of classifiers disagree the most. \citet{campbell2000query} and \citet{tong2001support} suggest choosing the unit closest to the classification boundary and the latter formulated it as a version space reduction process. \citet{scheffer2001active}, \citet{wang2011active}, and \citet{chen2013near} used the difference in prediction probability of the most and second most probably class as the criterion.

The single unit selection demands for less computational resource, but also has many drawbacks. It requires retraining with all units when each new unit is added to the data. This is inefficient when the training is time consuming. Since the next unit selection has to wait for the labeling of the previous selected unit and this is a waste of resource and time when the parallel labeling system is available. Also, there will be certain information overlap among sequentially selected units since similar points that are ranked as most informative or most plausible by other criteria will be selected. To address this issue, some has proposed to select the subset of points simultaneously as presentative examples for the whole population. One natural
approach is based on clustering (\citet{li2012active,ienco2013clustering}). The idea is that after clustering is performed, the center of the clusters will be good representative examples of the points in the cluster. \citet{sener2017active} investigated the Core-set selection problem. The representative method is more diversified than the information based method, but it tends to select abundant points in high density area of the distribution.

Many have studied batch mode active learning (BMAL) to achieve the informativeness and diversification goal jointly. \citet{brinker2003incorporating}, \citet{schohn2000less}, and \citet{xu2003representative} extended single instance selection strategies that use support vector machines (SVM), and enforced diversity among points within the margin of SVM. \citet{guo2007discriminative} focused on the semisupervised learning, where the objective function for the optimal batch is the likelihood function of labeled data minus the entropy of the unlabeled data expressed with the logistic regression as the working model. The optimization is achieved by reformulated the objective function as an integer programming problem. \citet{hoi2006batch,hoi2006large} choose multiple instances that efficiently reduce the Fisher information. Specifically, \citet{hoi2006large} extended \citet{zhang2000value}'s work based on logistic regression to the batch mode active learning and and took the criterion of the ratio of the information matrices between all unlabeled samples and the selected unlabeled sampled to be labeled. The ratio could be understood as the capability of using the fitted model to predict all unlabeled samples once the selected samples are labeled. They adopted the greedy algorithm based on the idea of submodular function to reach the $1-1/e$ efficiency. The algorithm iteratively optimize on each unit of the batch by fixing the remaining fixed. The submodularity has been extensively applied in literature of active learning. For example, \citet{wei2015submodularity} connected it to likelihood functions of two classifiers for the multistage BMAL. \citet{ravi2016experimental} studied BMAL for lasso, where the objective function is the D-optimality plus penalty term instead of the entropy. \citet{kirsch2019batchbald} extended \citet{houlsby2011bayesian} to the batch model mutual information measurement. \citet{deng2022query} studied semi-supervised clustering problem, where they select the most informative pair to query based on the entropy criterion. They have all applied the submodule function for finding the subset based on the criteria defined therein. For deep learning models, \citet{zhdanov2019diverse} used the K-means clustering algorithm for the batch selection. \citet{ghorbani2021data} used Shapley value to evaluate the importance of each data point. Specifically, a larger size pool of data is preselected with high Shapley values under the KNN classifier and a diversity algorithm is applied to further select the desired size of batch from the pool.

Overall, many approaches use a variety of heuristics to guide the instance selection process, where the selected batch should be informative about the classification model while being diverse enough so that their information overlap is minimized. For those formulating the BMAL as optimization problem that aims to learn a good classifier directly, either an integer programing method or the submodular algorithm with $1-1/e$ efficiency is employed to solve the optimization problem. This is due to the NP-hard nature of subset selection problem. Here, we adopt a new method which first relax the combinatorial problem to a continuous optimization problem and a round procedure is adopted to obtain the discrete solution. Mathematically, this formulation is related to the constrained optimal design of experiments historically. See \citet{Wynn1977,wynn1982optimum, fedorov1989optimal, muller1998design, sahm2001note, pronzato2004minimax, pronzato2006sequential, ucinski2015algorithm} for example. However, these methods are not applicable for the current BMAL problem. One main reason is because exchange algorithm has been majorly adopted for the optimization. This is not computational efficient in dealing with large database. Meanwhile, models considered in literature are mostly of regression type (e.g. logistic regression) so that the information matrix from the unit is of the form $xx^T$, where $x$ is a column vector and $^T$ is the transpose operator. Recently, the backward algorithm is proposed by \citet{ouyang2016design} when they search for the D-optimal sample for the logistic regression model. In \citet{ouyang2016designed}, they further applied the procedure to precision medicine. \citet{ouyang2017batch} extended the discussion to other optimality criteria. Unfortunately, the backward algorithm is quite slow especially when the population size increases. Besides the speed issue of these two methods, the global statistical efficiency of the derived samples remains unknown. Here, we have provided geometric interpretations of our algorithm to justify the empirical evidenced close to 100\% efficiency. 

To summarize, our results have the following novelties. $(i)$ The statistical efficiency of any candidate sample can be evaluated without knowing the exact optimal sample; $(ii)$ It can be applied to a very wide class of statistical models; $(iii)$ It can be integrated with a broad class of information criteria; $(iv)$ It is much faster than existing algorithms. $(v)$ A geometric interpretation is adopted to theoretically justify the relaxation of the original combinatorial problem to continuous optimization problem.

\section{The geometric point of view}\label{sec:geo}
In this section, we assume the information matrix at a single point $x$ is of the form $M_x=xx^T$. Also, we focus on D-criterion and assume all parameters are of interest, i.e. $p=0$ and $g(\theta)=\theta$. Note a sample, say $S$, could be represented by a vector $\delta=(\delta_1,\delta_2,...,\delta_N)$, where $\delta_i=1$ if $i\in S$ and $\delta_i=0$ otherwise. As a result, we have $M(S)=\sum_{i=1}^N\delta_i x_ix_i^T$ and problem (\ref{eqn:821}) reduces to
\begin{equation}\label{eqn:129}
\max_\delta \left|\sum_{i=1}^N\frac{\delta_i}{n} x_ix_i^T\right|,
\end{equation}
subject to $\delta\in \{0,1\}^N$ and $\sum_{i=1}^N\delta_i=n$. To bypass the computational bottleneck of (\ref{eqn:129}), it is natural to replace $\delta_i/n$ by a continuous weight $w_i$ as follows.
\begin{equation}
\max_{w\in \Omega} log|M(w)|. \tag{P1}
\end{equation}
where $M(w)=\sum^N_{i=1}w_ix_ix_i^T$ and $\Omega=\{w=\{w_1,w_2,...,w_N\}: w_i\geq 0, 1\leq i\leq N, \sum^N_{i=1}w_i=1\}$. In approximate design theory, a vector $w\in \Omega$ is called a measure, $M(w)$ is its information matrix, and the solution of (P1) is called a D-optimal measure. However, (P1) is not a good approximation of (\ref{eqn:129}) since the number of points with positive weights (related to $k$) is typically far less than $n$, i.e. the pre-specified number of points to be selected. As shown in Figure \ref{fig1}, only two points are selected by (P1) while $n=10$ may be needed. 
\begin{figure}[H]
\centering
\includegraphics[scale=0.5]{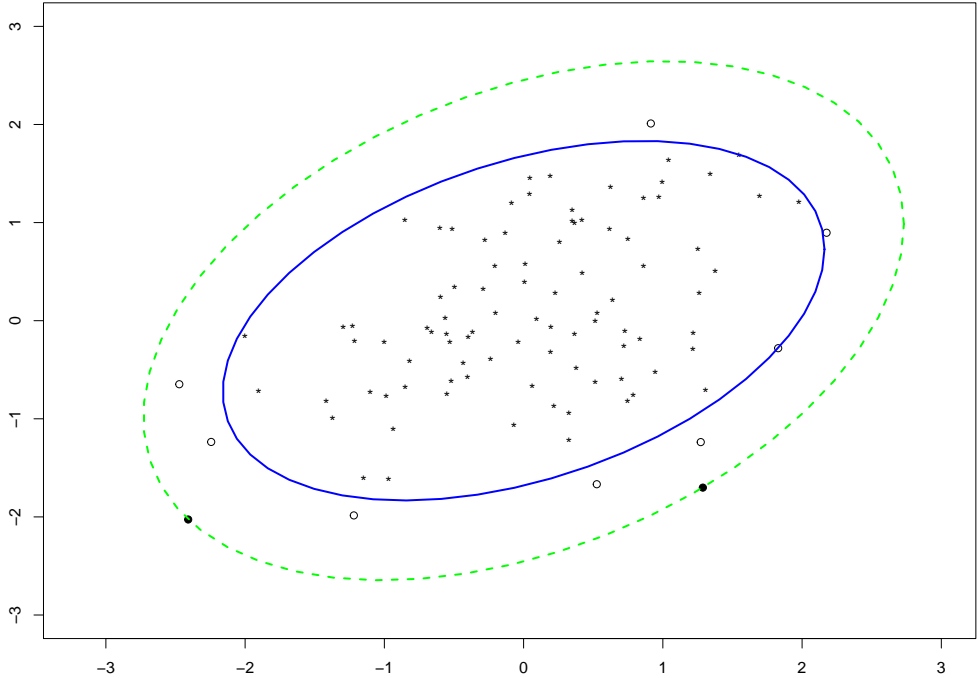}
\caption{The data consists of $100$ points generated from a two dimensional normal distribution. The two solid points are the supporting points of the solution of (P1) and the ten circled points are the supporting points of the solution of (P2) and (P1-$\epsilon$). The dashed ellipsoid is the solution of (D1) and the solid ellipsoid is the solution of (D2).}\label{fig1}
\end{figure}

To address this issue, we shall first review on the geometric view of this D-optimal design problem, and then explain our proposal in an intuitive way. By Section 3 of \citet{vandenberghe1998determinant}, the dual problem of (P1) is given by
\begin{equation}
\min_{W\in \R^{k\times k} }-log|W|,\tag{D1}
\end{equation}
subject to $W>0$ and $x_i^TWx_i\leq k$, $1\leq i\leq N$. Note that the set ${\cal E} (W)=\{x|x^TWx\leq k\}$ features an ellipsoid centered at origin and the region within it. Geometrically, (D1) tries to find the smallest volume ellipsoid among all ellipsoids which contain all points in ${\cal X}$ and are centered at the origin. See the dashed ellipsoid in Figure 1 for ${\cal E} (W^*)$, with $W^*$ being the solution of (D1).

Based on the duality between (P1) and (D1), we have $log|M(w)|\leq -log|W|$ with the equality holds if and only if $W^{-1}=M(w)$. As a result, we have the identity $W^*=M(w^*)^{-1}$, where $w^*=\{w_i^*,1\leq i\leq N\}$ is the solution of (P1). By Kiefer's general equivalence theorem for D-optimality, we have $w^*_j>0$ only for the points satisfying $x_j^TW^*x_j=k$. Geometrically, it means only the points on the boundary of ${\cal E} (W^*)$ could be the support point of a D-optimal measure. See the two solid circled points in the figure.

Note that the number of points on the boundary of ${\cal E} (W^*)$ is typically comparable to $k$ and hence far less than $n$. For the purpose of selecting $n$ points from ${\cal X}$, the geometric representation of (D1) inspires us to shrink the ellipsoid ${\cal E} (W^*)$ until $n$ points emerge on or outside the boundary of the new ellipsoid, as represented by the solid ellipsoid in Figure \ref{fig1}. The collection of these points will be a reasonable, though not necessarily optimal, solution for (\ref{eqn:129}). See the $n=10$ circled points outside the solid ellipsoid in the figure. To be more specific, we define the soft-margin minimum volume ellipsoid problem as
\begin{equation}
\min_{W\in \R^{p\times p},\xi_i\in\R }-log|W|+\lambda \sum^N_{i=1}\xi_i,\tag{D2}
\end{equation}
subject to $W>0$, $x_i^TWx_i\leq k+\xi_i$ and $\xi_i\geq 0$, $1\leq i\leq N$, with $\lambda$ being the penalty parameter. The introduction of the slack variable $\xi_i$ allows some points to fall outside the ellipsoid. Define the Lagrange $L=-log|W|+\lambda \sum^N_{i=1}\xi_i-\sum^N_{i=1}\alpha_i(k+\xi_i-x_i^TWx_i)-\sum^N_{i=1}\mu_i\xi_i$, from which we can derive the following Karush-Kuhn-Tucker (KKT) conditions.
\begin{eqnarray}
\frac{\partial L}{\partial W}=-W^{-1}+\sum^N_{i=1}\alpha_ix_ix_i^T&=&0,\label{eqn:124}\\
\frac{\partial L}{\partial \xi_i}=\lambda-\alpha_i-\mu_i&=&0,\label{eqn:1242}\\
\alpha_i,\mu_i&\geq& 0,  1\leq i\leq N,\label{eqn:125}\\
\mu_i\xi_i=\alpha_i(k+\xi_i-x_i^TWx_i)&=&0, 1\leq i\leq N.\label{eqn:1252}
\end{eqnarray}
Plugging (\ref{eqn:124}) and (\ref{eqn:1242}) back to the objective function of (D2), we derive its dual problem as
\begin{equation}
\max_{\alpha_i\in \R, 1\leq i\leq N}log \left|\sum^N_{i=1}\alpha_ix_ix_i^T\right| +k(1-\sum\alpha_i),\tag{P2}
\end{equation}
subject to $0\leq \alpha_i\leq \lambda$. Here we utilize the equation
\begin{eqnarray*}
\sum^N_{i=1}\alpha_ix_i^T\left(\sum^N_{i=1}\alpha_ix_ix_i^T\right)^{-1}x_i&=&k.
\end{eqnarray*}
As a result, the maximum of the objective function in (P2) shall be equal to the minimum of the objective function in (D2). For the solutions of (P2) and (D2), say $\alpha$ and $W$, we shall have the identity $W=M(\alpha)^{-1}$, where $M(\alpha)=\sum^N_{i=1}\alpha_ix_ix_i^T$.

To this end, we can characterize the solution of (P2) in Theorem \ref{thm:1015} below in view of (\ref{eqn:125}), (\ref{eqn:1252}) and the fact that $\xi_i>0$ if and only if $x_i^TWx_i>k$.
\begin{theorem}\label{thm:1015}
Suppose $\alpha=\{\alpha_1,...,\alpha_N\}$ is the solution of (P2). Then, for $1\leq i\leq N$, we have $(i)$ $\alpha_i=0$ if $x_i^TM(\alpha)^{-1}x_i<k$. $(ii)$ $\alpha_i=\lambda$ if $x_i^TM(\alpha)^{-1}x_i>k$. $(iii)$ $0\leq \alpha_i\leq \lambda$ if $x_i^TM(\alpha)^{-1}x_i=k$.
\end{theorem}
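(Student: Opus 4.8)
The plan is to read Theorem~\ref{thm:1015} directly off the KKT system (\ref{eqn:124})--(\ref{eqn:1252}), once we know that this system characterizes the optimal pair $(W,\xi)$ of (D2) together with the optimal $\alpha$ of (P2). I would first record that (D2) is a convex program — $-\log|W|$ is convex on $\{W>0\}$, while $x_i^TWx_i\le k+\xi_i$ and $\xi_i\ge 0$ are linear in $(W,\xi)$ — and that it satisfies Slater's condition, e.g. $W=\epsilon I$ with $\epsilon$ small and all $\xi_i$ large is strictly feasible. Under the standing assumption that $\{x_1,\dots,x_N\}$ spans $\R^k$ (without which (D2) has no finite minimizer and $M(\alpha)$ could be singular), the objective of (D2) is coercive on its feasible set, so the minimum is attained at some $(W,\xi)$ with $W>0$; strong duality then gives that the optimal values of (P2) and (D2) coincide and that the maximizer $\alpha$ of (P2) is exactly the vector of multipliers for the constraints $x_i^TWx_i\le k+\xi_i$. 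Hence $(\alpha,W,\xi,\mu)$, with $\mu_i$ as in (\ref{eqn:1242}), solves the full KKT system.

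Two observations then close the argument. From (\ref{eqn:124}), $W^{-1}=\sum_{i=1}^N\alpha_ix_ix_i^T=M(\alpha)$, so $x_i^TWx_i=x_i^TM(\alpha)^{-1}x_i$ for every $i$, which translates the hypotheses into statements about $M(\alpha)$. And because $\xi_i$ enters the (D2) objective with positive coefficient $\lambda$ while being lower-bounded only by $\xi_i\ge 0$ and $\xi_i\ge x_i^TWx_i-k$, the optimal $\xi_i$ equals $\max\{0,\,x_i^TWx_i-k\}$, so $\xi_i>0$ iff $x_i^TWx_i>k$. Now I split into three cases. If $x_i^TM(\alpha)^{-1}x_i<k$, then $k+\xi_i-x_i^TWx_i\ge k-x_i^TWx_i>0$, and the complementary slackness $\alpha_i(k+\xi_i-x_i^TWx_i)=0$ in (\ref{eqn:1252}) forces $\alpha_i=0$, which is $(i)$. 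If $x_i^TM(\alpha)^{-1}x_i>k$, then $\xi_i\ge x_i^TWx_i-k>0$, so $\mu_i\xi_i=0$ gives $\mu_i=0$, and (\ref{eqn:1242}) yields $\alpha_i=\lambda$, which is $(ii)$. If $x_i^TM(\alpha)^{-1}x_i=k$, neither complementary slackness condition constrains $\alpha_i$ beyond the box bound $0\le\alpha_i\le\lambda$ inherited from (P2), which is $(iii)$.

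The main obstacle is entirely in the first paragraph: the case analysis is a one-line consequence of complementary slackness, but it relies on strong duality, on identifying the optimal $\alpha$ of (P2) with the KKT multiplier vector of (D2), and on (D2) attaining its optimum at a positive definite $W$ — the points where the spanning assumption on $\{x_i\}$ and $\lambda>0$ are actually used. I would also double-check that the Lagrangian elimination preceding the theorem really produces (P2) with the constraints $0\le\alpha_i\le\lambda$, which follows from $\alpha_i\ge 0$ together with $\mu_i=\lambda-\alpha_i\ge 0$. As a sanity check, the same conclusion can be obtained without invoking (D2): the partial derivative of the (P2) objective in $\alpha_i$ is $x_i^TM(\alpha)^{-1}x_i-k$, and for the concave box-constrained maximization this must be $\le 0$ when $\alpha_i=0$, $\ge 0$ when $\alpha_i=\lambda$, and $=0$ when $0<\alpha_i<\lambda$ — which gives exactly $(i)$--$(iii)$ in contrapositive form.
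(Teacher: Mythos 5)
Your argument is correct and is essentially the paper's own: the paper likewise derives Theorem~\ref{thm:1015} directly from the KKT conditions (\ref{eqn:125}) and (\ref{eqn:1252}) together with the identity $W=M(\alpha)^{-1}$ and the observation that $\xi_i>0$ if and only if $x_i^TWx_i>k$. You are somewhat more careful than the paper in justifying strong duality, attainment at $W>0$, and the identification of $\alpha$ with the multiplier vector, which is a welcome tightening rather than a departure.
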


It is worth noticing that the form of (P2) appear close to the objective function in the literature of finite population sampling and constraint space optimal design of experiment. See \citet{Wynn1977,wynn1982optimum, fedorov1989optimal, muller1998design, sahm2001note, pronzato2004minimax, pronzato2006sequential, ucinski2015algorithm} for relevant literature. One particular version of such problems can be written as follows. 

\begin{equation}
\max_{w\in \Omega_{\epsilon}} log |M(w)|, \tag{P1-$\epsilon$}
\end{equation}
where $0<\epsilon<1$ and $\Omega_{\epsilon}=\{(w_1,w_2,...,w_N)\in \Omega: 0\leq w_i\leq \epsilon, 1\leq i\leq N\}$. Here, we shall establish the connection between (P2) and (P1-$\epsilon$), and then provide further insights, improvements and extensions that are vital for our current context of BMAL. 

For the connection, we shall prove that the solution path of (P2) is covered by (P1-$\epsilon$) as illustrated as in Theorem \ref{thm:p2p1}. Relatively speaking, the objective function in (P1-$\epsilon$) is simpler than that in (P2). By choosing a proper value of $\epsilon$, e.g. $\epsilon=1/n$, we can guarantee at least $n$ positive weights in the solution of (P1-$\epsilon$) and the sample could be chosen according to the $n$ largest weights. Since our goal here is to select the important points, it is not the absolute but the relative values of $\alpha_i$ in (P2) that matters. 
\begin{theorem}\label{thm:p2p1}
For given solution of (P2), say $\alpha=\{\alpha_1,...,\alpha_N\}$, under the penalty $\lambda$. Let $\epsilon=\lambda/\sum^N_{i=1}\alpha_i$, the solution of (P1-$\epsilon$) will be given by $w=\alpha/\sum^N_{i=1}\alpha_i$.
\end{theorem}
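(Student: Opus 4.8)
The plan is to combine the positive homogeneity of the log-determinant on $k\times k$ matrices with the general equivalence theorem for the capped simplex $\Omega_\epsilon$, and then to obtain the required optimality conditions by merely rescaling Theorem \ref{thm:1015}. Set $T=\sum_{i=1}^N\alpha_i$ and $w=\alpha/T$. Since $M(\cdot)$ is linear, $M(\alpha)=T\,M(w)$, so that $\log|M(\alpha)|=k\log T+\log|M(w)|$ and, crucially, $x_i^TM(\alpha)^{-1}x_i=T^{-1}x_i^TM(w)^{-1}x_i$ for every $i$. First I would check $w\in\Omega_\epsilon$: nonnegativity and $\sum_i w_i=1$ hold by construction, while $w_i=\alpha_i/T\le\lambda/T=\epsilon$ follows from the constraint $\alpha_i\le\lambda$ in (P2); note $\epsilon=\lambda/T$ lies in $(0,1)$ precisely in the nondegenerate regime $T>\lambda$, which is the one relevant to (P1-$\epsilon$). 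One also records that $M(w)$ is positive definite, because $\alpha$ attains a finite objective value in (P2), so the quantities below are well defined.

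Next I would use that $w\mapsto\log|M(w)|$ is concave with partial derivative $x_i^TM(w)^{-1}x_i$ in coordinate $i$. For a concave differentiable objective over the polytope $\Omega_\epsilon$, a feasible $w$ is a global maximizer if and only if it maximizes the linearization $v\mapsto\sum_{i=1}^N d_i v_i$ over $v\in\Omega_\epsilon$, where $d_i:=x_i^TM(w)^{-1}x_i$. The maximization of a linear form over the capped simplex $\Omega_\epsilon=\{v\ge 0,\ \sum_i v_i=1,\ v_i\le\epsilon\}$ is a ``water-filling'' problem with the following standard description of its optimal face: a feasible $v$ is optimal if and only if there is a threshold $c$ with $v_i=\epsilon$ whenever $d_i>c$ and $v_i=0$ whenever $d_i<c$, the coordinates satisfying $d_i=c$ being free in $[0,\epsilon]$ subject only to $\sum_i v_i=1$. (The sufficiency direction, which is all that is needed, follows from the identity $\sum_i d_i v_i=\sum_i(d_i-c)v_i+c$ together with the termwise inequality $(d_i-c)v'_i\le(d_i-c)v_i$ valid for every competitor $v'\in\Omega_\epsilon$, using $0\le v'_i\le\epsilon$.) Thus everything reduces to exhibiting one such threshold for our $w$.

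That threshold is furnished by Theorem \ref{thm:1015} after rescaling. Since $d_i=T\bigl(x_i^TM(\alpha)^{-1}x_i\bigr)$, the condition $x_i^TM(\alpha)^{-1}x_i<k$ is the same as $d_i<kT$, and similarly with $>$ and $=$. Hence part (i) gives $d_i<kT\Rightarrow\alpha_i=0\Rightarrow w_i=0$; part (ii) gives $d_i>kT\Rightarrow\alpha_i=\lambda\Rightarrow w_i=\lambda/T=\epsilon$; and part (iii) gives $d_i=kT\Rightarrow 0\le\alpha_i\le\lambda\Rightarrow 0\le w_i\le\epsilon$. So $w$ satisfies the water-filling characterization with threshold $c=kT$, the mass constraint holding because $\sum_i w_i=1$; therefore $w$ maximizes the linearized objective over $\Omega_\epsilon$ and, by concavity, solves (P1-$\epsilon$), which is the claim.

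I expect the main obstacle to be the bookkeeping on the tie set $\{i:d_i=kT\}$ — the points lying on the boundary sphere of the optimal soft-margin ellipsoid. One must check that the equivalence theorem for $\Omega_\epsilon$ genuinely leaves those weights free in $[0,\epsilon]$ apart from the global normalization, that Theorem \ref{thm:1015}(iii) likewise leaves the matching $\alpha_i$ free in $[0,\lambda]$, and hence that the free index sets on the (P2) and (P1-$\epsilon$) sides coincide with mutually consistent residual masses. A secondary point, if the result is to be read as uniqueness of the solution of (P1-$\epsilon$) rather than merely ``$w$ is a solution,'' is to add a nondegeneracy assumption on ${\cal X}$ under which the optimal information matrix $M(w)=T^{-1}M(\alpha)$ pins down the weights.
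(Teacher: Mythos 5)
Your proof is correct, but it takes a genuinely different route from the paper's. The paper argues by a change of variables: writing $\alpha=Tw$ with $T=\sum_i\alpha_i$, the (P2) objective separates as $\log|M(w)|+k(1-T+\log T)$, and with $T$ frozen at its optimal value the feasible set for $w$ becomes exactly $\Omega_\epsilon$ with $\epsilon=\lambda/T$, so optimality of $\alpha$ in (P2) forces $w$ to maximize $\log|M(w)|$ over $\Omega_\epsilon$ --- a two-line restriction argument that uses no optimality conditions at all. You instead verify the first-order (equivalence-theorem) conditions for (P1-$\epsilon$) directly at $w$: concavity reduces global optimality to maximizing the linearization $\sum_i d_iv_i$ over the capped simplex, the water-filling characterization reduces that to exhibiting a threshold, and the threshold $c=kT$ is read off by rescaling the KKT trichotomy of Theorem \ref{thm:1015} via $x_i^TM(\alpha)^{-1}x_i=T^{-1}d_i$. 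Both arguments are sound; note that your route presupposes the machinery that the paper only introduces afterwards in Theorem \ref{thm:eqv}, whereas the paper's proof of this statement is self-contained. On the other hand, your computation is precisely the ``linear proportionality between the solutions of (P2) and (P1-$\epsilon$)'' that the paper invokes, without detail, to justify Corollary \ref{corol:trich}, so your argument delivers that corollary as a byproduct. Your closing caveats are well taken but harmless here: the sufficiency direction of the water-filling test is indifferent to how mass is distributed on the tie set, and the theorem (in either proof) establishes only that $w$ is \emph{a} solution of (P1-$\epsilon$), not uniqueness.
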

\begin{proof}
Let $w_i=\alpha_i/\sum^N_{i=1}\alpha_i$, the target function of (P2) becomes $log|M(w)|+k(1-\sum^N_{i=1}\alpha_i+\log(\sum^N_{i=1}\alpha_i))$. By fixing the second term at the solution of (P1-$\epsilon$), the maximization of the first term can be achieved by solving (P1-$\epsilon$) under the constraint of $0\leq w_i=\alpha_i/\sum^N_{i=1}\alpha_i\leq \lambda/\sum^N_{i=1}\alpha_i=\epsilon$. The constraint of $\sum^N_{i=1}w_i=1$ is inherently guaranteed by the construction of $w_i$.
\end{proof}

To this end, we have justified approximating problem (\ref{eqn:129}) by (P1-$\epsilon$). While this latter problem has been studied in literature, there are several questions and issues as listed below that are not resolved for the current application. It is main purpose of this paper to address these issues. 
\begin{enumerate}
\item The upper bound $\epsilon$ serves to enforce at least $1/\epsilon$ points to have positive weights, but it does not necessarily means the points such selected is a good solution for the original problem (\ref{eqn:821}). Do we have a theoretical justification for (P1-$\epsilon$)?
\item There seems to be no control for the maximum number of points with positive weights. What if there are too many of them as compared to $n$?
\item So far, the algorithms proposed in literature are statistically efficient, but not computational efficient. Do we have a much faster algorithm that still achieves the high statistical efficiency?
\item So far, most work typically focus on a particular model and work on a particular criterion, most frequently D-criterion. Can we propose an algorithm that is readily applicable to any arbitrary model with interest in any function ($g$) of the parameters under any $\Phi_p$-criterion?
\end{enumerate} 

The first issue has been largely addressed by the primal-dual argument coupled with the geometric interpretation of the problem as illustrated by Figure \ref{fig1}. Through the empirical results, we shall observe that our proposed algorithm would achieve close 100\% efficiency for all examples. The second issue is addressed by Corollary \ref{corol:trich}, which indicates that the number of points with positive weights will be very close to $n$ when we set $\epsilon=1/n$. The solutions for the last two issues will be provided in Section \ref{phi_p}.

\begin{corol}\label{corol:trich}
Suppose $w=\{w_1,...,w_N\}$ is the solution of (P1-$\epsilon$). Then there exists a constant, say $c$, such that for $1\leq i\leq N$, $(i)$ $w_i=0$ if $x_i^TM(w)^{-1}x_i<c$. $(ii)$ $w_i=\epsilon$ if $x_i^TM(w)^{-1}x_i>c$. $(iii)$ $0\leq w_i\leq \epsilon$ if $x_i^TM(w)^{-1}x_i=c$.
\end{corol}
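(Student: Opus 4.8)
The plan is to argue directly from the first-order (KKT) optimality conditions of the concave program (P1-$\epsilon$), in close parallel to the derivation that produced Theorem \ref{thm:1015}. The feasible set $\Omega_\epsilon$ is a bounded polytope cut out by the affine constraints $\sum_{i=1}^N w_i=1$, $w_i\geq 0$ and $w_i\leq\epsilon$, and the objective $\log|M(w)|$ is concave and continuously differentiable on the open set where $M(w)\succ0$. First I would introduce multipliers $\beta_i\geq0$ for $w_i\geq0$, $\gamma_i\geq0$ for $w_i\leq\epsilon$, and a free multiplier $c$ for the equality $\sum_i w_i=1$, and form $L=\log|M(w)|+\sum_i\beta_i w_i-\sum_i\gamma_i(w_i-\epsilon)+c(1-\sum_i w_i)$. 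Using $\partial\log|M(w)|/\partial w_i=\mathrm{tr}(M(w)^{-1}x_ix_i^T)=x_i^TM(w)^{-1}x_i$, stationarity reads $x_i^TM(w)^{-1}x_i+\beta_i-\gamma_i-c=0$ for every $i$, together with complementary slackness $\beta_i w_i=0$ and $\gamma_i(w_i-\epsilon)=0$.

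Before the sign analysis I would dispatch the only genuinely delicate point: that the maximizer $w$ has $M(w)\succ0$, so that $M(w)^{-1}$ in the statement is well defined and the objective is differentiable there. This is immediate: $\log|M(w)|=-\infty$ whenever $M(w)$ is singular, whereas the uniform weight $w_i\equiv 1/N$ (feasible since $\epsilon\geq 1/N$ is required for $\Omega_\epsilon\neq\emptyset$) gives $M(w)=N^{-1}\sum_i x_ix_i^T\succ0$ under the standing assumption that $\{x_i\}$ spans $\R^k$; hence the maximizer is nonsingular. The maximizer need not be unique as a vector, but $M(w)$ is, by strict concavity of $\log\det$ on positive definite matrices, so the characterization is unambiguous. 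With only affine constraints, no constraint qualification beyond feasibility is needed for the KKT conditions to be necessary at the maximizer.

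Then the conclusion is a short sign chase on the stationarity identity $\beta_i-\gamma_i=c-x_i^TM(w)^{-1}x_i$. If $x_i^TM(w)^{-1}x_i<c$, the right-hand side is positive, and since $\gamma_i\geq0$ this forces $\beta_i>0$, whence $w_i=0$ — case $(i)$. If $x_i^TM(w)^{-1}x_i>c$, the right-hand side is negative, and since $\beta_i\geq0$ this forces $\gamma_i>0$, whence $w_i=\epsilon$ — case $(ii)$. If $x_i^TM(w)^{-1}x_i=c$, then $\beta_i=\gamma_i$, and nothing is extracted beyond the feasibility bound $0\leq w_i\leq\epsilon$ — case $(iii)$. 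The constant in the statement is precisely this $c$, the multiplier of the normalization $\sum_i w_i=1$.

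As an alternative route, one can instead combine Theorems \ref{thm:1015} and \ref{thm:p2p1}: writing $w=\alpha/\sum_j\alpha_j$ for the corresponding solution $\alpha$ of (P2), one has $M(w)^{-1}=(\sum_j\alpha_j)M(\alpha)^{-1}$, hence $x_i^TM(w)^{-1}x_i=(\sum_j\alpha_j)x_i^TM(\alpha)^{-1}x_i$, and Theorem \ref{thm:1015} transfers verbatim with threshold $c=k\sum_j\alpha_j=k\lambda/\epsilon$. The gap in this route is that it needs every solution of (P1-$\epsilon$) to arise from some solution of (P2) under a suitable $\lambda$, i.e.\ the reverse of the correspondence in Theorem \ref{thm:p2p1}, which is why I would prefer the self-contained KKT argument; its only real obstacle is the nonsingularity of $M(w)$ at the optimum, and that is handled as above.
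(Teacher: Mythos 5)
Your proof is correct, but it takes a different route from the paper's. The paper establishes Corollary \ref{corol:trich} by transferring the trichotomy of Theorem \ref{thm:1015} through the linear proportionality of Theorem \ref{thm:p2p1} (it explicitly says the corollary "can be obtained by investigating the linear proportionality between the solutions of (P2) and (P1-$\epsilon$)", the KKT machinery having already been deployed on the soft-margin ellipsoid problem (D2)/(P2)). You instead run the KKT analysis directly on (P1-$\epsilon$): multipliers for $w_i\geq 0$, $w_i\leq\epsilon$ and $\sum_i w_i=1$, the gradient identity $\partial\log|M(w)|/\partial w_i=x_i^TM(w)^{-1}x_i$, and a sign chase, with the threshold $c$ identified as the multiplier of the normalization constraint. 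Your sign analysis is right, and your preliminary points (nonsingularity of $M(w)$ at the optimum via feasibility of the uniform measure, sufficiency of affine constraints for KKT necessity) are exactly the details that need to be in place. What your route buys is self-containedness: as you correctly observe, the paper's route as stated only maps solutions of (P2) \emph{into} solutions of (P1-$\epsilon$) (Theorem \ref{thm:p2p1} goes in one direction), so transferring Theorem \ref{thm:1015} verbatim to an \emph{arbitrary} solution of (P1-$\epsilon$) implicitly requires that every such solution arises from some (P2) solution under a suitable $\lambda$ --- a surjectivity the paper does not establish. Your direct argument sidesteps this entirely, at the cost of redoing a Lagrangian computation the paper had already carried out for (D2). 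Your alternative-route paragraph, including the explicit threshold $c=k\sum_j\alpha_j=k\lambda/\epsilon$, is essentially the paper's intended proof together with an honest statement of its gap.
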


This corollary can be obtained by investigating the linear proportionality between the solutions of (P2) and (P1-$\epsilon$). It characterized the solution of (P1-$\epsilon$) by the trichotomy structure as in Theorem \ref{thm:1015}, which is the key for the development of efficient algorithms. Note that the proof of Corollary \ref{corol:trich} is, to some extend, inspired by the KKT condition arguments adopted in supporter vector machine. As shown in Figure \ref{fig_lev}, only few points will have the weight strictly between $0$ and $\epsilon$. In fact, these are the points fall exactly on the small ellipsoid in Figure 1. Similarly, the points with $w_i=\epsilon$ is exactly the points outside the small ellipsoid in Figure 1. This addresses the second issue as mentioned earlier. The short answer is that the total number of points with positive weight will be only slightly larger than $n$, if not equal. 
\begin{figure}[H]
\centering
\includegraphics[scale=0.5]{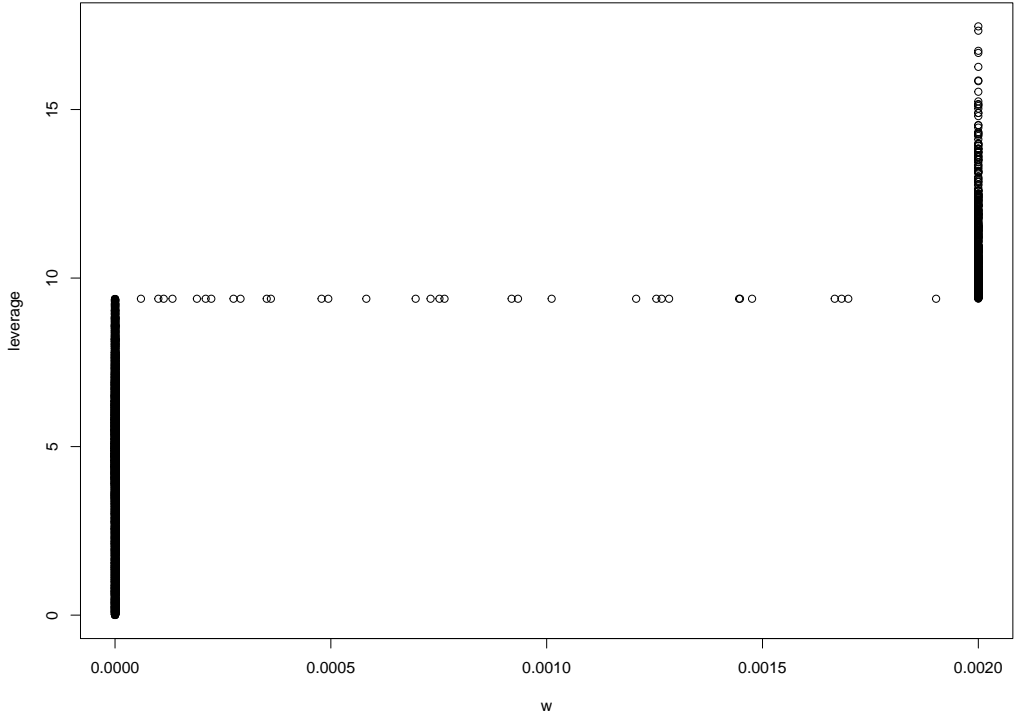}
\caption{The plot of leverage score $x_i^TM(w)^{-1}x_i$ versus the weight $w_i$ when $w$ is optimal for (P1-$\epsilon$).}\label{fig_lev}
\end{figure}


To address the third issue, Theorem \ref{thm:eqv} provide a Kiefer's type of equivalence theorem for the optimality conditions of (P1-$\epsilon$), which sheds light on the construction of Algorithm 1 below. Once the solution of (P1-$\epsilon$) is derived, we shall project it to the solution space of (\ref{eqn:129}) as a reasonable solution for (\ref{eqn:129}). As a result, our sample consists of the $n$ points with the largest weights in the solution of (P1-$\epsilon$).

\begin{theorem}\label{thm:eqv}
(i) A measure $w$ solves (P1-$\epsilon$) if and only if
\begin{eqnarray}\label{eqn:1293}
\max_{w'\in \Omega_{\epsilon}} Tr(M(w)^{-1}M(w'))&=&k
\end{eqnarray}
with the maximum achieved by $w'=w$. \\
(ii) If a measure $w$ is a solution of (P1-$\epsilon$), we have $w_i\leq w_j$ whenever $x_i^TM(w)^{-1}x_i<x_j^TM(w)^{-1}x_j$.
\end{theorem}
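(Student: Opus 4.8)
The plan is to obtain part (i) as a Kiefer-type equivalence theorem coming from the first-order optimality condition of a concave maximization over a convex polytope, and then to read part (ii) directly off the linear-programming structure that part (i) exposes.

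For part (i), I would first record the two structural facts that drive everything: $\Omega_\epsilon$ is a convex polytope (the probability simplex intersected with the box $\{0\le w_i\le\epsilon\}$), and $w\mapsto\log|M(w)|$ is concave on the set where $M(w)\succ0$, because $M(w)=\sum_i w_i x_ix_i^T$ is linear in $w$ and $\log\det$ is concave on the positive-definite cone. Since the optimal value of (P1-$\epsilon$) is finite (we take $\Omega_\epsilon$ to contain at least one measure with nonsingular information matrix, which holds once the $x_i$ span $\R^k$ and $\Omega_\epsilon\neq\emptyset$), any solution $w$ satisfies $M(w)\succ0$, so differentiation is legitimate. The key computation is the one-sided directional derivative along the segment from $w$ toward an arbitrary $w'\in\Omega_\epsilon$: writing $M((1-t)w+tw')=M(w)+t(M(w')-M(w))$ gives $\frac{d}{dt}\log|M((1-t)w+tw')|\big|_{t=0^+}=Tr\big(M(w)^{-1}(M(w')-M(w))\big)=Tr(M(w)^{-1}M(w'))-k$, where the last step uses $Tr(M(w)^{-1}M(w))=Tr(I_k)=k$. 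By the standard characterization of maximizers of a concave function on a convex set, $w$ solves (P1-$\epsilon$) if and only if this directional derivative is $\le0$ for every $w'\in\Omega_\epsilon$, i.e. $Tr(M(w)^{-1}M(w'))\le k$ for all $w'\in\Omega_\epsilon$; since $Tr(M(w)^{-1}M(w))=k$, this is exactly the assertion that the maximum in (\ref{eqn:1293}) equals $k$ and is attained at $w'=w$.

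For part (ii), the observation is that $Tr(M(w)^{-1}M(w'))=\sum_{\ell=1}^N w'_\ell\,d_\ell$ where $d_\ell:=x_\ell^TM(w)^{-1}x_\ell$, so part (i) says precisely that a solution $w$ of (P1-$\epsilon$) is also an optimal solution of the linear program $\max_{w'\in\Omega_\epsilon}\sum_\ell w'_\ell d_\ell$. Then an exchange argument finishes it: suppose $d_i<d_j$ but $w_i>w_j$; then $w_i>0$ and $w_j<\epsilon$, so for every sufficiently small $\eta>0$ the perturbed measure $w+\eta(e_j-e_i)$ (with $e_i,e_j$ the standard basis vectors) stays in $\Omega_\epsilon$ while increasing the linear objective by $\eta(d_j-d_i)>0$, contradicting optimality of $w$ for the linear program. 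Hence $d_i<d_j$ forces $w_i\le w_j$.

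The substantive content is all in part (i); once it is in hand, part (ii) and Corollary \ref{corol:trich} are essentially immediate. The main obstacle I anticipate is not a single step but the bookkeeping around degeneracy — ensuring $M(w)\succ0$ at a solution, justifying differentiation under $\log\det$, and invoking the concave-maximization optimality criterion in a form general enough to cover the polytope constraint — all standard, but to be handled carefully so that the linear-programming reformulation used in part (ii) drops out without extra work.
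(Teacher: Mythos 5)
Your proposal is correct and follows essentially the same route as the paper: part (i) is the directional-derivative characterization of maximizers of the concave function $\log|M(w)|$ over the convex set $\Omega_\epsilon$, combined with $Tr(M(w)^{-1}M(w))=k$, and part (ii) is an exchange argument exploiting the linearity of $Tr(M(w)^{-1}M(w'))$ in $w'$ (the paper swaps the two coordinates outright where you transfer a small mass $\eta$, but these are the same idea). Your extra care about $M(w)\succ 0$ and feasibility of the perturbation is a welcome tightening of details the paper leaves implicit.
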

\begin{proof}
Let $L(w)=log|M(w)|$. By the concavity of $L(w)$, $w$ maximises $L(w)$ if and only if
\begin{eqnarray}\label{eqn:1296}
0\geq  \left.\frac{\partial L((1-\alpha)w+\alpha w')}{\partial \alpha}\right|_{\alpha=0}=Tr(M(w)^{-1}M(w'))-k
\end{eqnarray}
for any measure $w'\in \Omega_{\epsilon}$. That indicates
\begin{eqnarray}\label{eqn:1294}
\max_{w'\in \Omega_{\epsilon}} Tr(M(w)^{-1}M(w'))&\leq& k
\end{eqnarray}
Since $Tr(M(w)^{-1}M(w))=k$ and
\begin{eqnarray}\label{eqn:1297}
Tr(M(w)^{-1}M(w'))=\sum^N_{i=1}w'_ix_i^TM(w)^{-1}x_i
\end{eqnarray}
is linear in $w'$, hence we should also have
\begin{eqnarray}\label{eqn:1295}
\max_{w'\in \Omega_{\epsilon}} Tr(M(w)^{-1}M(w'))&\geq &k
\end{eqnarray}
Then (\ref{eqn:1293}) is a direct result of (\ref{eqn:1294}) and (\ref{eqn:1295}). For part $(ii)$, suppose $w_i> w_j$, (\ref{eqn:1296}) shall be violated by taking $w'_i=w_j$, $w'_j=w_i$ and the rest components of $w'$ being equal to that of $w$.
\end{proof}

To introduce our algorithm for (P1-$\epsilon$), we define a measure, say $\tilde{w}$, to be the {\it steepest gradient} (SG) measure of $w$ if it achieves the maximum in (\ref{eqn:1293}). For a given measure, its SG measure represents the direction of fastest improvement on the objective function. Particularly, a measure is the SG measure of itself if it is the optimal solution of (P1-$\epsilon$).  Let $S_n(w)\subset \{1,2,...,N\}$ be the index of $n$ points with the largest value of $x_i^TM(w)^{-1}x_i$. By (\ref{eqn:1297}), we shall be able to explicitly construct the SG measure of $w$ in the following way. If $1/\epsilon$ is an integer, let $\tilde{w}_i=\epsilon$ for $i\in S_{1/\epsilon}(w)$ and $\tilde{w}_i=0$ for $i\not\in S_{1/\epsilon}(w)$. In general, set $\tilde{w}_i=\epsilon$ for $i\in S_{[1/\epsilon]}(w)$, $\tilde{w}_i=1-\epsilon[1/\epsilon]$ for $i\in S_{[1/\epsilon]+1}(w)-S_{[1/\epsilon]}(w)$ and $\tilde{w}_i=0$ for $i\not\in S_{[1/\epsilon]+1}(w)$. Here, $[1/\epsilon]$ is the largest integer that is smaller than $1/\epsilon$. The value $x_i^TM(w)^{-1}x_i$ is called the leverage score in literature. It is the variance of predicted value of the response at $x_i$ in a linear model. Hence, a SG measure tries to collect the most unpredictable points from the dataset. Now we propose the following algorithm for (P1-$\epsilon$).

\begin{center}
\fbox{
  \parbox{\textwidth}{
{\bf Algorithm 1}

{\bf Step 1:} Initialization. Give the initial solution $w^0$ such that $M(w^0)>0$.

{\bf Step 2:} Update. At iteration $t\geq 1$, calculate the SG measure of $w^{t-1}$, namely $\tilde{w}^{t-1}=\{\tilde{w}_1^{t-1},...,\tilde{w}_N^{t-1}\}$. Let $T^t_1=\{i:w^{t-1}_i=\tilde{w}^{t-1}_i=\epsilon\}$ and $T^t_2=\{i:w^{t-1}_i=\tilde{w}^{t-1}_i=0\}$. Maximize $log|M(w)|$ subject to the constraints that $w_i=\epsilon$ for $i\in T^t_1$, $w_i=0$ for $i\in T^t_2$ and $w\in \Omega_{\epsilon}$. Denote the solution by $w^t$.

{\bf Step 3:} Check the optimality based on Theorem \ref{thm:eqv}. If $Tr(M(w^t)^{-1}M(\tilde{w}^t))<k(1+v)$, with $v$ being a sufficiently small positive constant, stop and return the value of $w^t$. Otherwise, repeat Step 2 until the convergence criterion is met.

  }
}
\end{center}  

The computational complexity of deriving the SG measure is $O(k^3+Nk^2)$ and the computational complexity of maximizing $log|M(w)|$ step 2 is $O((m+n)k^2+k^3)$, where $m$ is the number of elements neither in $T^t_1$ nor in $T^t_2$. Hence it is important to choose a proper initial solution so that $m$ is small in the beginning. This can be achieved by selecting the set $S$ of $n$ points according to the largest leverage score $x_i^T(\sum^N_{i=1}x_ix_i^T)^{-1}x_i$ and let $w_i^0=1/n$ for $i\in S$. It is justified by the fact that the leverage score is proportional to the derivative of $log|M(w)|$ with respective to $w_i$ evaluated at the uniform measure $w_i=1/N$, $1\leq i\leq N$. Hence, the initial measure such selected is essentially the SG measure of the uniform measure. As a result, we have $m\leq 2n$ and typically we have $m=o(n)$ especially in later iterations. To sum up, the computational complexity of Algorithm 1 is $O(k^3+Nk^2)$. For fixed dimension $k$, it is simply $O(N)$. Another idea to expedite the algorithm is to update the measure by a linear combination of the measure and its SG measure with the weight carefully chosen. These ideas will be implemented in Algorithm 3 for the general $\Phi_p$ optimality.

\section{The $\Phi_p$ optimality}\label{phi_p}
In this section, we try to address the last two issues as proposed in Section \ref{sec:geo}, that is to derive an efficient (both statistically and computationally) algorithm to derive near-optimal solution of (\ref{eqn:821}) in a unified framework for any model, any function of parameters and any design criterion. Our framework shall permit the evaluation of the statistical efficiency of the derived samples as compared to the actual optimal one, even thought the latter may not be known. The theoretical results regarding the global convergence is also obtained. The speed of the algorithm will be illustrated by the examples in the next section.

Inspired by the geometric justification of (P1-$\epsilon$) for D-criterion, we propose to approximate (\ref{eqn:821}) by the following continuous optimization problem.
\begin{eqnarray}\label{eqn:130}
\min_{w\in \Omega_\epsilon}\Phi_p(\Sigma_g(w)),
\end{eqnarray}
for $0\leq p<\infty$, where $\Phi_p$ is defined in (\ref{eqn:1253}) and
\begin{eqnarray*}
\Sigma_g(w)&=&\frac{\partial g(\theta)}{\partial \theta^T}M(w)^{-1}\left(\frac{\partial g(\theta)}{\partial \theta^T}\right)^T,\\
M(w)&=&\sum^N_{i=1}w_iM_{x_i}.
\end{eqnarray*}
Here, the information matrix $M_{x_i}$ from a single point $x_i$ does not has to be of the particular form $x_ix_i^T$. When we take $g$ to be the identity function and $p=0$, we have $\Phi_p(\Sigma_g(w))=|M(w)|^{-1/k}$. Hence, (\ref{eqn:130}) covers (P1-$\epsilon$) as a special case. For a sample $S$ of size $n$, let $w(S)\in \Omega$ be the measure assigning weight $1/n$ to each element of $S$. Obviously, we have $\Phi_p(\Sigma_g(w(S)))=n\Phi_p(\Sigma_g(S))$. Note that $w(S)\in \Omega_\epsilon$ for all $S\subset \{1,2,...,N\}$ if and only if $\epsilon\geq 1/n$. In this case, the domain of (\ref{eqn:130}) covers the domain of (\ref{eqn:1253}) and hence we can evaluate the (lower bound of) efficiency of $S$
\begin{eqnarray}\label{eqn:207}
{\cal E}_p(w)&:=&\frac{\Phi_p(\Sigma_g(w^*))}{\Phi_p(\Sigma_g(w(S)))},
\end{eqnarray}
where $w^*$ is the solution of (\ref{eqn:130}). As in Section \ref{sec:geo}, we shall sample the $n$ points with the largest weights in $w^*$ as an approximate solution for the minimization of (\ref{eqn:1253}). To make sure $w^*$ has at least $n$ positive weights, we need the inequality $\epsilon\leq 1/n$. As a result, we would suggest to simply set $\epsilon=1/n$ in general.

Now we shall generalize the results of Section \ref{sec:geo} for the $\Phi_p$ criterion and prove the global convergence of the generalized algorithm. The following result plays a crucial role in the development of Algorithms 2 to be proposed.

\begin{lemma}\label{lemma1}
Let $\Omega_{\epsilon}^+=\{w\in \Omega_\epsilon: M(w)>0\}$. For $0\leq p<\infty$, $\Phi_p(\Sigma_g(w))$ is convex on the domain of $\Omega_{\epsilon}^+$.
\end{lemma}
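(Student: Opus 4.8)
The plan is to deduce convexity of $w\mapsto\Phi_p(\Sigma_g(w))$ from two classical facts together with an elementary identity: (a) the ``information matrix mapping'' $w\mapsto\Sigma_g(w)^{-1}$ is \emph{matrix concave}; (b) Kiefer's information function $\phi_s(C)=(q^{-1}\mathrm{Tr}\,C^{s})^{1/s}$ is \emph{concave} and \emph{isotonic} (monotone for the Löwner order) on the positive-definite matrices for every order $s\le 1$; and (c) $\Phi_p(\Sigma)=1/\phi_{-p}(\Sigma^{-1})$ for $\Sigma>0$. First I would note that $\Omega_\epsilon^{+}$ is convex, being the intersection of the polytope $\Omega_\epsilon$ with $\{w:M(w)>0\}$, the latter convex because $w\mapsto M(w)=\sum_i w_iM_{x_i}$ is affine and the positive-definite cone is convex; so ``convexity on $\Omega_\epsilon^{+}$'' is meaningful. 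Write $G=\partial g(\theta)/\partial\theta^{T}$, a $q\times k$ matrix taken to have full row rank, so that $\Sigma_g(w)=GM(w)^{-1}G^{T}>0$ on $\Omega_\epsilon^{+}$.

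The substantive step is (a). Since $w\mapsto M(w)$ is affine it suffices to show $M\mapsto(GM^{-1}G^{T})^{-1}$ is matrix concave on $\{M>0\}$, and I would prove this from the variational formula
\begin{equation*}
(GM^{-1}G^{T})^{-1}=\min\bigl\{\,L^{T}ML:\ L\in\R^{k\times q},\ GL=I_q\,\bigr\},
\end{equation*}
with the minimum taken in the Löwner order and attained at $L_0=M^{-1}G^{T}(GM^{-1}G^{T})^{-1}$: one checks $GL_0=I_q$ and $L_0^{T}ML_0=(GM^{-1}G^{T})^{-1}$, and for any competitor $L$ with $GL=I_q$ the difference $D=L-L_0$ satisfies $GD=0$, which annihilates the cross terms and gives $L^{T}ML=L_0^{T}ML_0+D^{T}MD\succeq(GM^{-1}G^{T})^{-1}$. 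Once this representation is available, matrix concavity follows at once: at $M_t=tM_1+(1-t)M_2$ pick the attaining $L$ for $M_t$ and use linearity of $L^{T}(\cdot)L$ in $M$ together with the representation applied to $M_1$ and $M_2$. (Alternatively, cite the corresponding statement in Pukelsheim's monograph on optimal design, or give the equivalent Schur-complement argument.)

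For (b) I would invoke the classical analysis of the $\Phi_p$/$\phi_p$ family due to \citet{kiefer1974general} (see also Pukelsheim's monograph): for $s\le 1$, with the conventions $\phi_0(C)=|C|^{1/q}$ and $\phi_{-\infty}(C)=\lambda_{\min}(C)$, the map $\phi_s$ is positive, concave, and isotonic on the positive-definite $q\times q$ matrices; only $s=-p\le 0$ is needed here, the borderline $s=0$ being the Minkowski determinant inequality. The identity (c) is immediate from $\mathrm{Tr}(\Sigma^{p})=\mathrm{Tr}\bigl((\Sigma^{-1})^{-p}\bigr)$ and the definitions ($p=0$: $|\Sigma|^{1/q}=1/|\Sigma^{-1}|^{1/q}$). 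It is worth recording \emph{why} a more direct route is unavailable: $\Sigma_g(w)$ is itself matrix convex, but $\Phi_p$ is not convex on the positive semidefinite matrices when $0\le p<1$ (indeed it is concave at $p=0$), so one cannot simply compose convexity with convexity; rerouting through $\phi_{-p}$, which \emph{is} concave and isotonic, is what makes a uniform argument possible.

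Putting the pieces together: $w\mapsto C_g(w):=\Sigma_g(w)^{-1}$ is a matrix-concave map from the convex set $\Omega_\epsilon^{+}$ into the positive-definite cone, and precomposing the isotonic concave scalar function $\phi_{-p}$ with it yields a concave scalar function; hence $w\mapsto\phi_{-p}(C_g(w))$ is concave and strictly positive on $\Omega_\epsilon^{+}$, and since $t\mapsto 1/t$ is convex and decreasing on $(0,\infty)$ the composition $w\mapsto\Phi_p(\Sigma_g(w))=1/\phi_{-p}(C_g(w))$ is convex. The main obstacle is the matrix concavity in step (a): the tempting shortcut ``$M\mapsto M^{-1}$ is matrix convex, so invert the matrix-convex $GM^{-1}G^{T}$ once more'' does not work, because inversion reverses the Löwner order and the two inequalities fail to chain; the variational representation (or Schur complements) is genuinely required. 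The only other care needed is the sign bookkeeping in the $\Phi_p\leftrightarrow\phi_{-p}$ conjugacy and its degenerate endpoints.
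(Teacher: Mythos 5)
Your proof is correct, and it is genuinely more self-contained than what the paper offers: the paper disposes of Lemma \ref{lemma1} with a one-line appeal to Lemma 1 of \citet{yang2013optimal} (modified by dropping the first-stage design $\xi_0$ and shrinking the domain to $\Omega_\epsilon^{+}$), whereas you reconstruct the full argument from first principles. Your route --- the Gauss--Markov variational representation $(GM^{-1}G^{T})^{-1}=\min\{L^{T}ML:GL=I_q\}$ giving matrix concavity of the information-matrix mapping, composed with the concave isotonic Kiefer mean $\phi_{-p}$ and then with the convex decreasing reciprocal via $\Phi_p(\Sigma)=1/\phi_{-p}(\Sigma^{-1})$ --- is the classical Pukelsheim-style argument, and every step checks out: the attaining $L_0$ is verified, the cross terms vanish because $GD=0$, the minimum of linear maps is Löwner-concave, and the composition rules are applied with the correct monotonicity in each slot. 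You are also right to flag that the naive chain ``$\Sigma_g$ is matrix convex, $\Phi_p$ is convex'' breaks down for $0\le p<1$, which is precisely why the detour through the information matrix is needed for a uniform treatment of all $p\ge 0$; this is the same structural reason the cited literature routes the argument through $\phi_{-p}$. What your write-up buys over the paper's citation is transparency (the reader sees exactly where full row rank of $G=\partial g(\theta)/\partial\theta^{T}$ and positive definiteness of $M(w)$ on $\Omega_\epsilon^{+}$ enter); what it costs is length, and you should state the full-row-rank assumption on $G$ explicitly rather than parenthetically, since without it $\Sigma_g(w)$ need not be invertible and the identity $\Phi_p(\Sigma)=1/\phi_{-p}(\Sigma^{-1})$ is not available.
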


The proof follows by the same arguments as in Lemma 1 of \citet{yang2013optimal} under the classical framework of approximate design theory, except that $\xi_0$ therein shall be deleted. In \citet{yang2013optimal}, the inclusion of $\xi_0$ as the first stage design is necessary for the proof of the convexity since they worked on the domain $\Omega$ for the second stage design. Here, we bypass the need for the first stage design by shrinking the domain from $\Omega_{\epsilon}$ to $\Omega_{\epsilon}^+$. Such adjustment is crucial for the establishment of the theoretical convergence of our proposed algorithms. By Lemma \ref{lemma1}, a measure $w$ is a solution of (\ref{eqn:130}) if and only if the Fr\'echet direvitive of the objective function is nonnegative, namely
\begin{eqnarray}\label{eqn:1302}
\eta(w',w):= \left.\frac{\partial \Phi_p(\Sigma_g((1-\alpha)w+\alpha w'))}{\partial \alpha}\right|_{\alpha=0}&\geq &0.
\end{eqnarray}
By direct calculation we have
\begin{eqnarray}
-\eta(w',w)&=&\sum^N_{i=1} (w'_i-w_i)\phi(x_i,w)\label{eqn:1303}\\
&=&\sum^N_{i=1} w'_i\phi_p(x_i,w)-\Phi_p(\Sigma_g(w)),\nonumber
\end{eqnarray}
\begin{eqnarray}
\phi_p(x,w)&=&q^{-1/p}\left(Tr\left[\Sigma_g(w)\right]^p\right)^{1/p-1}\nonumber\\
&&\times Tr\left(\Sigma_g(w)^{p-1}\frac{\partial g(\theta)}{\partial \theta^T}M(w)^{-1}M_xM(w)^{-1}\left(\frac{\partial g(\theta)}{\partial \theta^T}\right)^T\right).\nonumber
\end{eqnarray}

The value of $\phi_p(x,w)$ can be considered as the $\Phi_p$ version of the leverage. By using the same arguments as in Theorem \ref{thm:eqv} with (\ref{eqn:1296}) and (\ref{eqn:1297}) therein replaced by  (\ref{eqn:1302}) and (\ref{eqn:1303}), we have

\begin{theorem}\label{thm:geqv}
(i) A measure $w$ solves (\ref{eqn:130}) if and only if
\begin{eqnarray}
\max_{w'\in \Omega_{\epsilon}} \sum^N_{i=1} w'_i\phi_p(x_i,w)&=&\Phi_p(\Sigma_g(w))
\end{eqnarray}
with the maximum achieved by $w'=w$. \\
(ii) If a measure $w$ is a solution of (P1-$\epsilon$), we have $w_i\geq w_j$ whenever $\phi_p(x_i,w)>\phi_p(x_j,w)$.
\end{theorem}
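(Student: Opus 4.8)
The plan is to transcribe the proof of Theorem \ref{thm:eqv} almost verbatim, with the concavity/chain-rule step there replaced by the Fr\'echet-derivative machinery set up after Lemma \ref{lemma1}. First I would restrict attention to $\Omega_\epsilon^+$: since $\Phi_p(\Sigma_g(w)) = +\infty$ whenever $M(w)$ is singular, any minimizer of (\ref{eqn:130}) must lie in $\Omega_\epsilon^+$ (this presumes $\Omega_\epsilon^+\neq\emptyset$, which is exactly the estimability requirement on $\epsilon$ and ${\cal X}$). On $\Omega_\epsilon^+$ the objective is convex by Lemma \ref{lemma1}, and for the optimal $w\in\Omega_\epsilon^+$ the segment $(1-\alpha)w+\alpha w'$ stays in $\Omega_\epsilon^+$ for all $\alpha\in[0,1)$ and $w'\in\Omega_\epsilon$, because $M((1-\alpha)w+\alpha w')\ge (1-\alpha)M(w)>0$; hence the one-sided derivative $\eta(w',w)$ in (\ref{eqn:1302}) is well defined and, by the standard first-order characterization of a minimizer of a convex function over the convex set $\Omega_\epsilon$, $w$ solves (\ref{eqn:130}) if and only if $\eta(w',w)\ge 0$ for every $w'\in\Omega_\epsilon$.

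Next I would rewrite that condition through the explicit formula (\ref{eqn:1303}). The inequality $\eta(w',w)\ge 0$ for all $w'\in\Omega_\epsilon$ is, by the second line of (\ref{eqn:1303}), equivalent to $\sum_{i=1}^N w'_i\phi_p(x_i,w)\le\Phi_p(\Sigma_g(w))$ for all $w'\in\Omega_\epsilon$, i.e. $\max_{w'\in\Omega_\epsilon}\sum_i w'_i\phi_p(x_i,w)\le\Phi_p(\Sigma_g(w))$. The matching lower bound, together with the claim that the maximum is attained at $w'=w$, is automatic: taking $w'=w$ in (\ref{eqn:1303}) makes the perturbed measure constant in $\alpha$, so $\eta(w,w)=0$ and hence the identity $\sum_i w_i\phi_p(x_i,w)=\Phi_p(\Sigma_g(w))$ holds for every $w\in\Omega_\epsilon^+$. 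Combining the two inequalities gives part (i). For $p=0$ one either repeats the argument with $\Phi_0(\Sigma_g(w))=|\Sigma_g(w)|^{1/q}$ and its Fr\'echet derivative in place of $\phi_p$, or obtains the statement by continuity as $p\to 0$.

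For part (ii) I would argue by contradiction, exactly as in Theorem \ref{thm:eqv}(ii). Suppose $w$ solves (\ref{eqn:130}) but there are indices with $w_i<w_j$ and $\phi_p(x_i,w)>\phi_p(x_j,w)$. Let $w'$ agree with $w$ except that $w'_i=w_j$ and $w'_j=w_i$; since $0\le w_i<w_j\le\epsilon$, we have $w'\in\Omega_\epsilon$. Plugging into the first line of (\ref{eqn:1303}) yields $-\eta(w',w)=(w_j-w_i)\bigl(\phi_p(x_i,w)-\phi_p(x_j,w)\bigr)>0$, so $\eta(w',w)<0$, contradicting (\ref{eqn:1302}).

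The only genuinely delicate point is the domain bookkeeping that I have flagged: the classical development (e.g. \citet{yang2013optimal}) phrases the equivalence theorem on $\Omega$, where positivity of the information matrix can fail on the boundary, and circumvents this with a first-stage design; here I must instead justify that convexity and the first-order optimality condition are being applied on $\Omega_\epsilon^+$, that a minimizer exists there (compactness of $\Omega_\epsilon$ plus lower semicontinuity of $\Phi_p\circ\Sigma_g$), and that the box constraint $w_i\le\epsilon$ is compatible with $M(w)>0$. The remaining ingredients — existence and the matrix-calculus verification that the directional derivative of $\Phi_p(\Sigma_g(\cdot))$ is given by (\ref{eqn:1303}) — are routine, and once they are in place the argument is a direct copy of the D-optimal case.
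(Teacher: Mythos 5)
Your proposal is correct and follows essentially the same route as the paper, which proves Theorem \ref{thm:geqv} by repeating the argument of Theorem \ref{thm:eqv} with (\ref{eqn:1296}) and (\ref{eqn:1297}) replaced by (\ref{eqn:1302}) and (\ref{eqn:1303}); your convexity-based first-order characterization, the identity $\sum_i w_i\phi_p(x_i,w)=\Phi_p(\Sigma_g(w))$ giving the matching bound, and the swap argument for part (ii) are exactly that. The extra care you take over $\Omega_\epsilon^+$ is precisely the point the paper delegates to Lemma \ref{lemma1} and its surrounding discussion, so it is a welcome elaboration rather than a departure.
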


Theorem \ref{thm:geqv} generalizes Theorem \ref{thm:eqv} in characterising the solution of (\ref{eqn:130}). Moreover, the trichotomy phenomena for (P1-$\epsilon$) also exists for (\ref{eqn:130}) with the details given by Corollary \ref{cor2}.

\begin{corol}\label{cor2}
Suppose $w=\{w_1,...,w_N\}$ is the solution of (\ref{eqn:130}). Then there exists a constant, say $c$, such that for $1\leq i\leq N$, $(i)$ $w_i=0$ if $\phi_p(x_i,w)<c$. $(ii)$ $w_i=\epsilon$ if $\phi_p(x_i,w)>c$. $(iii)$ $0\leq w_i\leq \epsilon$ if $\phi_p(x_i,w)=c$.
\end{corol}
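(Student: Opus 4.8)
The plan is to read Corollary~\ref{cor2} off the equivalence theorem as the complementary‑slackness structure of a linear program with the ``cost vector'' frozen at the optimal $w$. Write $\ell_w(w')=\sum_{i=1}^N w_i'\,\phi_p(x_i,w)$. By part (i) of Theorem~\ref{thm:geqv}, if $w$ solves (\ref{eqn:130}) then $w$ is a maximizer of the \emph{linear} functional $\ell_w(\cdot)$ over the polytope $\Omega_{\epsilon}=\{w':0\le w_i'\le\epsilon,\ \sum_i w_i'=1\}$; equivalently this is just the Fr\'echet condition $\eta(w',w)\ge 0$ for all $w'\in\Omega_\epsilon$ read off from (\ref{eqn:1303}). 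So the whole statement reduces to describing which points of $\Omega_\epsilon$ maximize a linear functional, and the constant $c$ to be exhibited will be the Lagrange multiplier associated with the constraint $\sum_i w_i'=1$.

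The one step requiring a genuine argument is a ``no‑interleaving'' exchange argument. Suppose there were indices $i,j$ with $w_i<\epsilon$, $w_j>0$ and $\phi_p(x_i,w)>\phi_p(x_j,w)$. Perturb $w$ to $w'$ by setting $w_i'=w_i+\eta$, $w_j'=w_j-\eta$ for a sufficiently small $\eta>0$ and leaving all other coordinates unchanged; then $w'\in\Omega_\epsilon$ and $\ell_w(w')-\ell_w(w)=\eta\bigl(\phi_p(x_i,w)-\phi_p(x_j,w)\bigr)>0$, contradicting the maximality of $w$. Hence $\max\{\phi_p(x_i,w):w_i<\epsilon\}\le\min\{\phi_p(x_j,w):w_j>0\}$, where the set $\{j:w_j>0\}$ is nonempty because the weights sum to $1$, and (in the generic case $N\epsilon>1$) so is $\{i:w_i<\epsilon\}$; the boundary case $N\epsilon=1$, in which $w\equiv\epsilon$ is forced, is trivial. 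This is precisely part (ii) of Theorem~\ref{thm:geqv} sharpened to its strongest form.

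Finally I would choose $c$ to be any value in the interval $\bigl[\max\{\phi_p(x_i,w):w_i<\epsilon\},\ \min\{\phi_p(x_j,w):w_j>0\}\bigr]$ and conclude: if $\phi_p(x_i,w)>c$ then $\phi_p(x_i,w)$ strictly exceeds $\phi_p(x_j,w)$ for every $j$ with $w_j<\epsilon$, so $i$ itself cannot be such an index and $w_i=\epsilon$; symmetrically, $\phi_p(x_i,w)<c$ forces $w_i=0$; and when $\phi_p(x_i,w)=c$ only $0\le w_i\le\epsilon$ is asserted. The main obstacle, such as it is, is entirely contained in the exchange step and the attendant bookkeeping of degenerate weight configurations, together with checking that the perturbed measure stays in $\Omega_\epsilon$; everything else is the standard KKT/complementary‑slackness reading familiar from support vector machines. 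As a sanity check, taking $p=0$ and $g=\mathrm{id}$ makes $\phi_p(x,w)$ proportional to the leverage $x^T M(w)^{-1}x$, so this argument also recovers Corollary~\ref{corol:trich}.
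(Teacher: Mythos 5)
Your proof is correct and follows essentially the route the paper intends: Corollary~\ref{cor2} is stated there as a direct consequence of Theorem~\ref{thm:geqv} (with details omitted), and your exchange/complementary-slackness argument on the linear functional $\sum_i w_i'\phi_p(x_i,w)$ is exactly the perturbation technique the paper itself uses to prove Theorem~\ref{thm:eqv}(ii) and attributes, for Corollary~\ref{corol:trich}, to SVM-style KKT reasoning. Your ``no-interleaving'' inequality is a slight sharpening of Theorem~\ref{thm:geqv}(ii) as literally stated, but you correctly derive it from part (i) rather than citing part (ii), so no gap remains.
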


To develop the algorithm for (\ref{eqn:130}), we shall generalize the definition of SG measure $\tilde{w}$ of $w$ in Section \ref{sec:geo} to be the measure satisfying the following equation.
\begin{eqnarray}\label{eqn:1313}
\eta(\tilde{w},w)&=&\min_{w'\in\Omega_\epsilon}\eta(w',w).
\end{eqnarray}
The convergence of an algorithm typically relies on the convexity of the objective function. Inspired by Lemma \ref{lemma1}, we shall consider the measure which achieves the minimum in (\ref{eqn:1313}) among $\Omega_\epsilon^+$ instead of $\Omega_\epsilon$. Such a measure is said to be the {\it positive steepest gradient} (PSG) measure of $w$. Since $n>>k$, the SG measure is almost always the PSG measure in practice. For data generated by continuous distribution, even $n=k$ will insure the positiveness of the SG measure with probability $1$. We use the PSG measure in Algorithm 2, only for the need in the proof of convergence in Theorem \ref{thm:conv}. In practice, we can directly adopt the SG measure while its positiveness is checked during the process. The construction of SG measure for the $\Phi_p$ criterion is similar to that for the D criterion, only by replacing $x_i^TM(w)^{-1}x_i$ therein by $\phi_p(x_i,w)$. Particularly, when $1/\epsilon$ is an integer, the SG measure of $w$ is the measure with its weight equal to $\epsilon$ for $1/\epsilon$ points with largest values of $\phi_p(x_i,w)$.
\begin{center}
\fbox{
  \parbox{\textwidth}{
{\bf Algorithm 2}

{\bf Step 1:} Initialization. Give the initial solution $w^0$ such that $M(w^0)>0$.

{\bf Step 2:} Update. At iteration $t\geq 1$, calculate the PSG measure of $w^{t-1}$, namely $\tilde{w}^{t-1}=\{\tilde{w}_1^{t-1},...,\tilde{w}_N^{t-1}\}$. Minimize $\Phi_p(\Sigma_g(w))$ subject to the constraints that $w_i=\epsilon$ for $i\in T^t_1$, $w_i=0$ for $i\in T^t_2$ and $w\in \Omega_{\epsilon}$, where $T^t_1$ and $T^t_2$ are defined as in Algorithm1. Denote the solution by $w^t$.

{\bf Step 3:} Check the optimality based on Theorem \ref{thm:geqv}. If $\sum^N_{i=1} \tilde{w}_i^t\phi_p(x_i,w)>(1-v)\Phi_p(\Sigma_g(w^t))$, with $v$ being a sufficiently small positive constant, stop and return the value of $w^t$. Otherwise, repeat Step 2 until the convergence criterion is met.

  }
}
\end{center}

The minimization in Step 2 can be carried out by any generic optimization method, where the derivative of the objective function with respect to $w_i$ is $\phi_p(x_i,w)$ and the domain is $\Omega_\epsilon$. Theorem \ref{thm:conv} proves the global convergence of Algorithm 2.

\begin{theorem}\label{thm:conv}
Let $w^t$ be the solution at iteration $t$ of Algorithm 2 and $\hat{w}$ be the solution of (\ref{eqn:130}), we have
\begin{eqnarray}\label{eqn:131}
\lim_{t\rightarrow \infty}\Phi_p(\Sigma_g(w^t))&=&\Phi_p(\Sigma_g(\hat{w})).
\end{eqnarray}
\end{theorem}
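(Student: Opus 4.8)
The plan is a monotone–descent argument in the spirit of classical optimal–design convergence proofs. Write $f(w)=\Phi_p(\Sigma_g(w))$, which by Lemma \ref{lemma1} is convex and, being a smooth function of the positive definite matrices $M(w)$ and $\Sigma_g(w)$, is smooth on $\Omega_\epsilon^+$. \emph{Step 1 (monotonicity, compactness).} I would first note that $w^{t-1}$ is itself feasible for the subproblem solved at iteration $t$ (it obeys $w_i=\epsilon$ on $T^t_1$, $w_i=0$ on $T^t_2$, and lies in $\Omega_\epsilon$), so $f(w^t)\le f(w^{t-1})$; the values $f(w^t)$ are non-increasing. Since $\Sigma_g(w)$ — using that $\partial g(\theta)/\partial\theta^T$ has full row rank — blows up as $M(w)$ approaches a singular matrix, the sublevel set $\mathcal C=\{w\in\Omega_\epsilon: f(w)\le f(w^0)\}$ is a \emph{compact} subset of the \emph{open} set $\Omega_\epsilon^+$, and the entire trajectory stays in $\mathcal C$. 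In particular $f(w^t)$ decreases to some limit $L\ge f(\hat w)$.

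\emph{Step 2 (a forcing inequality).} Next I would lower–bound the per–step decrease. Put $c_{t-1}=-\eta(\tilde w^{t-1},w^{t-1})=-\min_{w'\in\Omega_\epsilon^+}\eta(w',w^{t-1})\ge 0$, which is $0$ exactly when $w^{t-1}$ is optimal. The segment $w(\alpha)=(1-\alpha)w^{t-1}+\alpha\tilde w^{t-1}$, $\alpha\in[0,1]$, is feasible for the $t$-th subproblem (its endpoints agree on $T^t_1$ and $T^t_2$, and $M(w(\alpha))=(1-\alpha)M(w^{t-1})+\alpha M(\tilde w^{t-1})>0$ by linearity of $M$ and convexity of the PSD cone), so $f(w^t)\le\min_{\alpha\in[0,1]}f(w(\alpha))$. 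Let $h(\alpha)=f(w(\alpha))$; by Lemma \ref{lemma1} $h$ is convex with $h'(0)=\eta(\tilde w^{t-1},w^{t-1})=-c_{t-1}$, and with $H$ a uniform bound on the operator norm of the Hessian of $f$ over a fixed compact neighbourhood of $\mathcal C$ in $\Omega_\epsilon^+$ and $D=\mathrm{diam}\,\Omega_\epsilon$, one has $h(\alpha)\le h(0)-c_{t-1}\alpha+\tfrac12 HD^2\alpha^2$ as long as $w(\alpha)$ stays in that neighbourhood; a short argument (using $h(0)=f(w^{t-1})\le f(w^0)$ from Step 1, so the quadratic bound keeps $h$ below the level defining the neighbourhood) shows $w(\alpha)$ cannot exit before $\alpha$ reaches $\alpha^\star=\min\{1,\,c_{t-1}/(HD^2)\}$. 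Evaluating at $\alpha^\star$ gives $f(w^{t-1})-f(w^t)\ge\psi(c_{t-1})$ with $\psi(s)=\min\{s/2,\ s^2/(2HD^2)\}$ continuous, strictly increasing, $\psi(0)=0$.

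\emph{Step 3 (passing to the limit).} Telescoping the forcing inequality, $\sum_{t\ge 1}\psi(c_{t-1})\le f(w^0)-L<\infty$, hence $c_t\to 0$, i.e. $\min_{w'\in\Omega_\epsilon^+}\eta(w',w^t)\to 0$. By compactness of $\mathcal C$ choose a subsequence $w^{t_j}\to w^\infty\in\mathcal C\subset\Omega_\epsilon^+$. For any fixed $w'\in\Omega_\epsilon^+$ we have $\eta(w',w^{t_j})\ge-c_{t_j}$; since $\eta(w',\cdot)$ is continuous on $\Omega_\epsilon^+$ (it is affine in the continuous leverages $\phi_p(x_i,\cdot)$), letting $j\to\infty$ yields $\eta(w',w^\infty)\ge 0$, and this extends to all $w'\in\Omega_\epsilon$ by density of $\Omega_\epsilon^+$ in $\Omega_\epsilon$ and continuity. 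Thus $w^\infty$ satisfies the optimality condition (\ref{eqn:1302}) and hence, by Lemma \ref{lemma1} / Theorem \ref{thm:geqv}(i), solves (\ref{eqn:130}): $f(w^\infty)=f(\hat w)$. Finally $L=\lim_j f(w^{t_j})=f(w^\infty)=f(\hat w)$, which is (\ref{eqn:131}).

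\emph{Where the difficulty lies.} The crux is Step 2: to control the curvature of $f$ uniformly one needs the iterates to remain in a compact subset of the \emph{open} feasible set $\Omega_\epsilon^+$ — this is exactly why Lemma \ref{lemma1} is phrased on $\Omega_\epsilon^+$ rather than $\Omega_\epsilon$ and why the PSG (rather than the plain SG) measure is used, so that the descent direction is admissible and the line search along it stays away from the boundary of $\Omega_\epsilon^+$ — together with checking that the line search leaves the compact set only after it has already realized the claimed decrease. Everything else (monotonicity, and the density/continuity step that transfers the optimality condition to the subsequential limit) is routine.
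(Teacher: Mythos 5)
Your proposal is correct in outline, but it takes a genuinely different route from the paper. The paper argues by contradiction: assuming a persistent optimality gap $\Delta>0$, it uses convexity (Lemma \ref{lemma1}) to get the gap inequality $\eta(\hat w,w^t)\le \Phi_p(\Sigma_g(\hat w))-\Phi_p(\Sigma_g(w^t))\le-\Delta$, hence $\eta(\tilde w^t,w^t)\le-\Delta$ by definition of the PSG measure; a Taylor expansion with a uniform second-derivative bound $K$ (imported from Theorem 3 of \citet{yang2013optimal}) then yields a fixed per-step decrease $\Delta^2/(2K)$, and summing forces $\Phi_p\to-\infty$, contradicting nonnegativity. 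You instead run a direct sufficient-decrease argument: monotonicity plus a forcing inequality gives $c_t=-\eta(\tilde w^t,w^t)\to 0$, and you then pass to a subsequential limit and invoke continuity of $\eta$ and the equivalence theorem to identify the limit value. Both proofs rest on the same three pillars (convexity, the fact that the line segment toward the PSG measure is feasible for the Step-2 subproblem so that $\Phi_p(\Sigma_g(w^{t+1}))\le\min_\alpha\Phi_p(\Sigma_g((1-\alpha)w^t+\alpha\tilde w^t))$, and a uniform curvature bound along the search direction). The paper's gap inequality buys a shortcut you could also have used: since convexity gives $\Phi_p(\Sigma_g(w^t))-\Phi_p(\Sigma_g(\hat w))\le-\eta(\hat w,w^t)\le c_t$, your conclusion $c_t\to0$ already implies (\ref{eqn:131}) with no subsequence or continuity argument at all. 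Conversely, your Step 1 makes explicit something the paper leaves implicit, namely why the curvature bound is uniform along the trajectory (compactness of the sublevel set inside $\Omega_\epsilon^+$); just note that your claim that $\Phi_p(\Sigma_g(w))$ blows up as $M(w)$ degenerates is automatic only when $g$ is the identity, and for general $g$ the rows of $\partial g(\theta)/\partial\theta^T$ must not lie in the range of the limiting singular $M$ — a point neither you nor the paper fully pins down.
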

\begin{proof}
Suppose (\ref{eqn:131}) does not hold, we shall have a constant, say $\Delta>0$, such that
\begin{eqnarray}\label{eqn:1312}
\Phi_p(\Sigma_g(w^t))-\Phi_p(\Sigma_g(\hat{w}))&\geq&\Delta,
\end{eqnarray}
for all $ t\geq 1$. By Lemma \ref{lemma1}, we have
\begin{eqnarray*}
\Phi_p(\Sigma_g((1-\alpha)w^t+\alpha\hat{w}))&\leq&(1-\alpha)\Phi_p(\Sigma_g(w^t))+\alpha\Phi_p(\Sigma_g(\hat{w})),
\end{eqnarray*}
for any $0\leq \alpha\leq 1$. By (\ref{eqn:1302}) and simple algebra and calculus, we have
\begin{eqnarray}\label{eqn:1314}
\eta(\hat{w},w^t)&\leq&\Phi_p(\Sigma_g(\hat{w}))-\Phi_p(\Sigma_g(w^t)),
\end{eqnarray}
Let $\tilde{w}^t$ be the PSG measure of $w^t$. By (\ref{eqn:1313}), (\ref{eqn:1312}) and (\ref{eqn:1314}), we have
\begin{eqnarray}\label{eqn:1315}
\eta(\tilde{w}^t,w^t)&\leq&\eta(\hat{w},w^t)\leq -\Delta
\end{eqnarray}
Denote
\begin{eqnarray}\label{eqn:629}
\tau(\alpha,w',w)&:=& \frac{\partial^2 \Phi_p(\Sigma_g((1-\alpha)w+\alpha w'))}{\partial \alpha^2}
\end{eqnarray}
By the same arguments for Theorem 3 of \citet{yang2013optimal}, there exist a constant $K$ such that $\sup\{\tau(\alpha,\tilde{w}^t,w^t):t\geq 1, 0\leq \alpha\leq 1/2\}=K<\infty$. Now we have the taylor's expansion
\begin{eqnarray*}
\Phi_p(\Sigma_g((1-\alpha)w^t+\alpha\tilde{w}^t))&=&\Phi_p(\Sigma_g(w^t))+\alpha\eta(\tilde{w}^t,w^t)+\alpha^2\tau(\alpha',\tilde{w}^t,w^t)/2\\
&\leq&\Phi_p(\Sigma_g(w^t))-\Delta\alpha+K\alpha^2/2
\end{eqnarray*}
where $0\leq \alpha'\leq \alpha\leq 1/2$. By taking $\alpha=\Delta/K\leq 1/2$, we have
\begin{eqnarray}\label{eqn:1316}
\Phi_p(\Sigma_g((1-\alpha)w^t+\alpha\tilde{w}^t))&\leq&\Phi_p(\Sigma_g(w^t))-\Delta^2/(2K).
\end{eqnarray}
However, the updating step of Algorithm 2 implies
\begin{eqnarray}\label{eqn:1317}
\Phi_p(\Sigma_g(w^{t+1}))\leq \Phi_p(\Sigma_g((1-\alpha)w^t+\alpha\tilde{w}^t)).
\end{eqnarray}
By (\ref{eqn:1316}) and (\ref{eqn:1317}), we have $\Phi_p(\Sigma_g(w^{t}))=\Phi_p(\Sigma_g(w^0))-t\Delta^2/(2K)\rightarrow -\infty$ as $t$ goes to infinity, which contradict with the fact that $\Phi_p(\Sigma_g(w))\geq 0$ for all $w\in \Omega$.
\end{proof}

Inspired by (\ref{eqn:1316}) and (\ref{eqn:1317}), we propose to speed up the updating step as follows. For $\alpha$ in the neighbourhood of zero, we have the approximation
\begin{eqnarray}
\Phi_p(\Sigma_g((1-\alpha)w+\alpha\tilde{w}))&\approx&\Phi_p(\Sigma_g(w))+\alpha\eta(\tilde{w},w)\label{eqn:6292}\\
&&+\alpha^2\tau(0,\tilde{w},w)/2.\nonumber
\end{eqnarray}
Here, the function $\tau$ as defined in (\ref{eqn:629}) is simply the second order derivative. By the convexity of $\Phi_p$, we have $\tau(0,\tilde{w},w)\geq 0$ with equality achieved by the optimal measure. Hence, the right hand side of (\ref{eqn:6292}) is minimized at $\alpha^*=-\eta(\tilde{w},w)/\tau(0,\tilde{w},w)$ and we shall update $w$ by the new measure as $(1-\alpha^*)w+\alpha^*\tilde{w}$. To avoid the singularity issue, we can add a very small amount to the denominator of $\alpha^*$. Since $\eta(\tilde{w}^{t-1},w^{t-1})\leq 0$ with the equality achieved by the optimal measure, we shall have $\alpha^*\geq0$ in this case. To avoid the negative case due to numerical error, we can truncate $\alpha^*$ by $0$ from the left. Meanwhile, in order for the approximation to be valid, we should truncate $\alpha^*$ in the right by a small value, say $0<r<1$. In practice, $r=1/4$ works well. To summarize, we shall take
\begin{eqnarray}\label{eqn:816}
\alpha^*&=&\min\left[r,\max\left(0,\frac{-\eta(\tilde{w},w)}{\tau(0,\tilde{w},w)+u}\right)\right]
\end{eqnarray}
with $u>0$ being an arbitrarily small constant. Due to its simplicity, this updating procedure $(1-\alpha^*)w+\alpha^*\tilde{w}$ shall be much faster than step 2 of Algorithm 2. Obviously, it can quickly boost up the speed of updating process in the early stages. However, it will be converge slowly in the later stages when precision requirement is high, that is when $v$ in step 3 of Algorithm 2 is small. The slow down phenomena could be explained by the sparsity property of the optimal solution as characterized by Corollary \ref{cor2}. Hence, we shall use the updating step of Algorithm 2 in the later stages. To take advantages of both sides, we propose the following hybrid algorithm.

\begin{center}
\fbox{
  \parbox{\textwidth}{
 {\bf Algorithm 3} (Hybrid Algorithm)

{\bf Step 1:} Initialization. Take initial solution $w^0$ to be the SG measure of the uniform measure.

{\bf Step 2:} Boost. At iteration $t\geq 1$, calculate the SG measure of $w^{t-1}$, namely $\tilde{w}^{t-1}$. Let $w^t=(1-\alpha^t)w^{t-1}+\alpha^t\tilde{w}^{t-1}$, where $\alpha^t$ is calculated by (\ref{eqn:816}).

{\bf Step 3:} Check. If $\sum^N_{i=1} \tilde{w}_i^t\phi_p(x_i,w)>(1-v_0)\Phi_p(\Sigma_g(w^t))$, with $v_0$ being a sufficiently small positive constant. If yes, go to step 4. Otherwise, repeat Step 2 until the criterion is met.

{\bf Step 4:} Optimization. Run Algorithm 2 with the solution from Step 3 as the initial solution and the precision parameter $v>0$.

  }
}
\end{center}

In general, we take $v<v_0$ so that step 4 will further improve on the measure provided by step 3. The flexibility of the algorithm also allows us to take $v>v_0$ so that step 4 will be skipped. This can be useful when the accuracy requirement is not super high. In practice, steps 1-3 of Algorithm 3 is very time effective for $v=0.001$ or larger. Since Algorithm 2 is embedded into the hybrid algorithm, we shall still have the global convergence of the latter. 

\begin{corol}
Let $w^t$ be the solution at iteration $t$ of the hybrid algorithm and $\hat{w}$ be the solution of (\ref{eqn:130}), we have
\begin{eqnarray}
\lim_{t\rightarrow \infty}\Phi_p(\Sigma_g(w^t))&=&\Phi_p(\Sigma_g(\hat{w})).
\end{eqnarray}
\end{corol}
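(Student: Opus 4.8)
The plan is to reduce the statement to Theorem~\ref{thm:conv}. The hybrid algorithm runs the boost phase (Steps~1--3) for some number of iterations, say $t_0$, and then, in Step~4, runs Algorithm~2 with initial solution $w^{t_0}$. Since $\alpha^t\le r<1$ in every boost step, $M(w^{t})\ge(1-r)\,M(w^{t-1})\succ0$ by induction from $M(w^0)\succ0$, so $w^{t_0}$ is an admissible initialisation for Algorithm~2. Theorem~\ref{thm:conv} concerns the iterates of Algorithm~2 as $t\to\infty$ (the Step-3 stopping rule being only a practical truncation), so it gives $\lim_{t\to\infty}\Phi_p(\Sigma_g(w^t))=\Phi_p(\Sigma_g(\hat w))$ for the Step-4 iterates; prepending the finitely many boost iterates $w^1,\dots,w^{t_0}$ does not change this limit. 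Hence the corollary follows \emph{provided $t_0<\infty$}, i.e. provided the boost phase terminates.

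It therefore remains to show the boost phase cannot run forever, and I would do this by recycling the argument of Theorem~\ref{thm:conv}. Suppose the Step-3 test fails at every iteration; by Theorem~\ref{thm:geqv} (the $\Phi_p$-optimality certificate) this forces $-\eta(\tilde w^t,w^t)\ge v_0\,\Phi_p(\Sigma_g(w^t))\ge v_0\,\Phi_p(\Sigma_g(\hat w))=:\delta>0$ for all $t$, where $\delta>0$ because $\hat w\in\Omega_\epsilon^+$, so $\Sigma_g(\hat w)$ is positive definite and $\Phi_p(\Sigma_g(\hat w))>0$. As in the proof of Theorem~\ref{thm:conv}, the boost iterates stay in a bounded sublevel set of $\Phi_p(\Sigma_g(\cdot))$ that is bounded away from the singular locus, so $\tau(\alpha,\tilde w^t,w^t)$ of (\ref{eqn:629}) admits a uniform bound $K<\infty$ over $t\ge1$ and $0\le\alpha\le 1/2$. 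A second-order Taylor expansion along the segment $[w^t,\tilde w^t]$ then yields, for the step size $\alpha^{t+1}\le r$ actually taken,
\begin{eqnarray*}
\Phi_p(\Sigma_g(w^{t+1}))&\le&\Phi_p(\Sigma_g(w^t))-\delta\,\alpha^{t+1}+\tfrac{K}{2}(\alpha^{t+1})^2.
\end{eqnarray*}
Using the explicit form (\ref{eqn:816}) of $\alpha^{t+1}$ together with $-\eta(\tilde w^t,w^t)\ge\delta$ and $\tau(0,\tilde w^t,w^t)\le K$, one bounds $\alpha^{t+1}$ away from $0$ and checks the right-hand side is at most $\Phi_p(\Sigma_g(w^t))-c$ for a constant $c>0$ independent of $t$; summing over $t$ forces $\Phi_p(\Sigma_g(w^t))\to-\infty$, contradicting $\Phi_p(\Sigma_g(\cdot))\ge0$. (The same estimate in fact shows that, even if the boost phase never terminated, the gap $\Phi_p(\Sigma_g(w^t))-\Phi_p(\Sigma_g(\hat w))$ could not stay bounded away from $0$, so the conclusion holds regardless.)

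I expect the main obstacle to be this last estimate. In Theorem~\ref{thm:conv} one is free to \emph{choose} the step size $\alpha=\Delta/K$ that makes $-\Delta\alpha+\tfrac K2\alpha^2$ as negative as possible; here $\alpha^{t+1}$ is prescribed by (\ref{eqn:816}) (the minimiser of a quadratic model built from $\tau(0,\cdot,\cdot)$ rather than from the global bound $K$) and is additionally capped at $r$, so one must verify that this particular step still produces a decrease bounded below uniformly in $t$. When the cap is inactive, $\alpha^{t+1}=-\eta/(\tau(0,\cdot,\cdot)+u)$ is a Newton-type step and the guaranteed decrease is of order $\delta^2/(K+u)$ after accounting for the discrepancy between $\tau(0,\cdot,\cdot)$ and $K$; when the cap is active, $\alpha^{t+1}=r$ and one uses $-\eta(\tilde w^t,w^t)\ge\delta$ with $r<1/2$ fixed. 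If a uniform decrease cannot be guaranteed for every admissible $r$, it suffices to take $r$ small relative to the data-dependent constant $K$, or to add a standard backtracking safeguard to the boost step; neither adjustment affects the $O(N)$ per-iteration cost. Showing the boost iterates remain in a compact subset of $\Omega_\epsilon^+$ (needed for the bound $K$) is a routine compactness consequence once the per-step non-increase is in hand. Everything else is a direct transcription of results already available: Lemma~\ref{lemma1} for convexity, Theorem~\ref{thm:geqv} for the optimality certificate, and Theorem~\ref{thm:conv} for the optimisation phase.
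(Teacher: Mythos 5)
Your core move---hand the boost-phase output to Algorithm~2 and invoke Theorem~\ref{thm:conv}---is exactly the paper's entire argument: the text preceding the corollary says only that ``since Algorithm~2 is embedded into the hybrid algorithm, we shall still have the global convergence of the latter,'' and no further proof is given. Your observation that $M(w^{t})\geq(1-r)M(w^{t-1})\succ0$ throughout the boost phase, so that $w^{t_0}$ is an admissible initialisation for Algorithm~2, is a correct detail the paper leaves implicit. Where you go beyond the paper is in recognising that the reduction is valid only if the boost phase terminates, a point the paper silently assumes; note in passing that the Step-3 test as printed ($\sum_i\tilde w_i^t\phi_p(x_i,w)>(1-v_0)\Phi_p(\Sigma_g(w^t))$) is \emph{always} satisfied because $\sum_i\tilde w_i^t\phi_p(x_i,w^t)\geq\sum_i w_i^t\phi_p(x_i,w^t)=\Phi_p(\Sigma_g(w^t))$, so read literally the boost phase exits after one iteration and the corollary is immediate; you have (correctly) read the test in its intended form $<(1+v_0)\Phi_p$.

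Under that intended reading, your termination argument has the gap you yourself flag, and it is a real one: the step size (\ref{eqn:816}) minimises a quadratic model built from the curvature $\tau(0,\tilde w^t,w^t)$ at $\alpha=0$ only, and a convex function of $\alpha$ can be nearly flat at the origin yet steeply curved further along the segment, so the model minimiser---even truncated at $r=1/4$---can land past the point where $\Phi_p$ returns to its starting value. Hence no per-step decrease, let alone a uniform one, follows for the algorithm as specified; moreover the uniform bound $K$ on $\tau$ is normally obtained from a level-set/compactness argument that itself presupposes monotone non-increase, so the two halves of your estimate are circular as written. Your proposed repairs (backtracking, or choosing $r$ small relative to the data-dependent $\delta/K$) would close the gap but amount to modifying Algorithm~3; a cleaner fix that preserves the stated algorithm is to cap the number of boost iterations at some finite $T_0$ and pass to Step~4 regardless, since the corollary needs only $t_0<\infty$ and $M(w^{t_0})\succ0$, not that the boost phase makes progress. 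In short: your reduction coincides with the paper's proof, and the unfinished piece of your proposal is a genuine lacuna---but it is one the paper shares rather than resolves.
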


\section{Examples}\label{sec:example}

\subsection{Logistic regression}\label{sec:simple}
Let $Y_i\in \{0,1\}$ and $z_i\in \R^k$, $1\leq i\leq N$, be the binary response and the exploratory variable, respectively. In logistic regression, the probability of success, $p_i=P(Y_i=1)$, is modelled by
\begin{eqnarray}\label{eqn:228}
log(p_i/(1-p_i))=z_i^T\beta.
\end{eqnarray}
The information matrix of a sample $S\subset \{1,2,...,N\}$ for $\beta$ is $M(S)=\sum_{i\in S}x_ix_i^T$, where $x_i=\sqrt{p_i(1-p_i)}z_i$. Correspondingly, we have $M(w)=\sum_{i=1}^Nw_ix_ix_i^T$ for a measure $w\in \Omega_\epsilon$. This model coupled with the objective function (\ref{eqn:129}) is adopted by \citet{ouyang2016design} when they tried to solve the validation sampling problem for error-prone medical records. They proposed the backward algorithm, that is to begin with $S^0=\{1,2,...,N\}$ and delete one point each time until the sample size reduces to $n$. At iteration $t$, they delete the point from $S^t$ with smallest value of $x_i^TM(S^t)x_i$, namely the point most predictable by the model fitted to $S^t$. This is a type of greedy algorithm, which tries to minimize the loss of information at each step of deletion. Besides the proposed algorithms in this paper, we will also include the exchange algorithm in the comparison. That is to iteratively delete the point with the smallest leverage in the sample and add the point with the largest leverage outside the sample in each updating step. We have argued that (\ref{eqn:129}) can be approximated by (\ref{eqn:207}) with $p=0$ and $g$ being the identity function. The latter can be solved by the hybrid algorithm proposed in this paper. We shall evaluate the (lower bound of) D-efficiency of a candidate sample by
\begin{eqnarray}
{\cal E}_0(w(S))&=&\frac{|M(w(S))|^{1/k}}{|M(w^*)|^{1/k}},
\end{eqnarray}
where $w^*$ is the solution of (\ref{eqn:207}).


An experiment to compare three algorithms in terms of both time and statistical efficiency is carried out in the R environment on a desktop with 3.4 GHz cpu. The study is based on simulated data, where each point consists of the constant term and a random vector from $10$ dimensional standard normal distribution. The results are given in Table \ref{table1}. While all the algorithms are able to produce near-optimal samples, the time efficiency turns out to be significantly different. The exchange algorithm only takes a fraction of the time needed by the backward algorithm and the hybrid algorithm further takes a fraction of what is needed by the exchange algorithm. It is possible to tune the exchange algorithm to run faster, but it unlikely the improved version could outperform the hybrid algorithm. We can also notice in the table that the D-efficiency of the backward algorithm is slightly higher than the other two by a tiny margin, that is because exchange and hybrid algorithms are set to stop at $v=10^{-3}$ for $N=10,000$ and $v=10^{-2}$ for larger values of $N$. If we set $v=10^{-6}$ for $N=10,000$, the hybrid algorithm produces a sample with D-efficiency of $0.999995$ and it takes only $3.33$ seconds.

\begin{table}[h]

\begin{center}
\begin{tabular}{|c|c|c|c|c|}
\hline
N&10,000&100,000&1,000,000&10,000,000\\ \hline
\multirow{2}{*}{Backward} & 6.64 mins&12.88 hours&NA&NA\\
                          & 0.9999939&0.9999895&NA&NA\\ \hline
\multirow{2}{*}{Exchange} &  24.02 secs  & 5.31 mins &1.24 hours &20.40 hours \\
                          &  0.9994489  & 0.9950084 & 0.9956839& 0.9947914 \\\hline
\multirow{2}{*}{Hybrid}&0.68 secs& 5.75 secs&1.17 mins&20.38 mins\\
                       &0.9999827& 0.999884&0.999649&0.9993768\\ \hline
\end{tabular}
\end{center}
\caption{The time cost and D-efficiencies of the backward, exchange and hybrid algorithms for selecting $n=1000$ points from populations of different sizes. The dimension is $k=11$. Each point consists of the constant term and a random vector from $10$ dimensional standard normal distribution.}\label{table1}
\label{default}
\end{table}

While the backward algorithm works on the D-optimality and the case when all the parameters of are interest, the hybrid algorithm on the other hand can tackle with any $\Phi_p$-optimality regarding any smooth function of the parameters. The time efficiency of the latter has been shown in Table \ref{table1}. Now we use it to investigate the influence of different design criteria on the choice of samples. Here, we focus on the A and D criteria. For fixed $N=10,000$ and various choices of $n$, we derive both the A-optimal and D-optimal samples and evaluate their efficiencies against each other under the opposite criteria. As shown by Table \ref{table2}, an A-optimal sample does not necessarily perform well under the D-criterion and vice versa. The discrepancy between the two criteria is larger when the sample size to population size ratio, i.e. $n/N$, is small. That is because less common points will be shared by the two samples when there are relatively fewer points to be selected.

\begin{table}[h]

\begin{center}
\begin{tabular}{|c|c|c|c|c|}
\hline
$n$&500&1000&3000&5000\\ \hline
A-efficiency & 0.6511&0.7064&0.8485&0.9439\\ \hline
D-effficiency&0.8022& 0.8287&0.8951&0.9484\\ \hline
\end{tabular}
\end{center}
\caption{The A-efficiency of the D-optimal sample and the D-efficiency of the A-optimal sample. The population is of fixed size 10,000. Each point consists of the constant term and a random vector from $10$ dimensional standard normal distribution.}\label{table2}
\label{default}
\end{table}

Let $\lambda_1,...\lambda_k$ be the eigenvalues of the inverse of the information matrix $M^{-1}$. Recall that A-criterion aims to minimize $\sum^k_{i=1}\lambda_i$ and D-criterion aims to minimize $\Pi^k_{i=1}\lambda_i$. As pointed out by \citet{ouyang2017batch}, the A-criterion treats the eigenvalues equally important while the D-criterion allows certain eigenvalues to be large as long as some eigenvalues are small enough. We can understand this by looking at the confidence region of the parameters, which is in the form of $\{\beta: (\beta-\hat{\beta})^TM(\beta-\hat{\beta})\leq c\}$ for some constant $c$. It represents an ellipsoid, where the lengths of the axes are proportional to the $\lambda_i$'s. In order for the volume of the ellipsoid to be small, it is preferable to have a long but narrow shaped ellipsoid rather than a round shaped ellipsoid. The former could be achieved by choosing certain eigenvalues to be small enough while keeping others within some range.

\subsection{Ordinal response: cumulative link model}
Recently, \citet{yang2017d} studied the optimal designs for the cumulative link model
\begin{eqnarray}\label{eqn:723}
h(\gamma_j)=\theta_j-z^T\beta, &&j=1,2,...,J-1,
\end{eqnarray}
where $\gamma_j=P(Y\leq j)$ is the probability for the ordinal response $Y\in \{1,2,...,J\}$ to be no larger than $j$ and $h$ is the link function. They worked on the classical design problem where the weights, $w_i$'s, are not bounded by $\epsilon$. When $h$ is taken to be the logit link function, (\ref{eqn:723}) is reduced to the proportional odds model. \citet{perevozskaya2003optimal} studied optimal designs for this model when $z$ is univariate. For the general model (\ref{eqn:723}) with $z\in \R^d$, \citet{yang2017d} revealed an interesting fact that the minimum number of experimental settings is $d+1$, which is strictly less than the number of parameters $d+J-1$. This fact is of more interest for a classical design problem than the sampling problem here, since we are enforcing the constraint $w\leq \epsilon$ for the weights of the designs points and a large number of supporting points is desired. By direct calculations, the information matrix for the parameter $\theta=(\beta^T,\theta_1,\theta_2,...,\theta_{J-1})^T$ at the covariate $z$ is
\begin{eqnarray*}
M_z&=&\sum^J_{j=1}\frac{1}{\pi_j}x_jx_j^T,\\
x_j&=&
\left(\begin{tabular}{c}
$[f_j(z)-f_{j-1}(z)]z$\\
$f_j(z)e_j-f_{j-1}(z)e_{j-1}$
\end{tabular}\right),
\end{eqnarray*}
where $\pi_j=P(Y=j)$, $f_j(z)=(h^{-1})'(\theta_j-z^T\beta)$ and $e_j$ is the $J-1$ dimensional vector with the $j$-th elements as $1$ and other elements as $0$. Here, we adopt the conventions $f_0(z)=f_J(z)=0$ and $e_0=e_J=0$. For the proportional odds model, we have $f_j(z)=exp(\theta_j-z^T\beta)/(1+exp(\theta_j-z^T\beta))^2$. This representation of the information matrix is much simpler than that in Yang, Tong and Mandal (2015) and is more suitable for the application of the algorithms proposed in this paper. It also provides a mathematical explanation for why the minimum number of designs points could be $d+1$ instead of $d+J-1$. Note that the matrix $M_z$ based on a single point could be of rank larger than $1$, while the rank can only be $1$ for traditional models such as the example in Section \ref{sec:simple}. As a result, it is possible to collect $d+1$ points so that the summation of the information matrices based on them is of full rank.

Now we apply the algorithm to the wine quality data set, which is publicly available in the UCI Machine Learning Repository (https://archive.ics.uci.edu/ml/datasets/Wine+Quality). The data were collected to build a model regarding how the quality of the wine is affected by various possible factors. Hence we shall be more interested in the parameter $\beta$ in (\ref{eqn:723}) rather than the $\theta_j$'s. The data includes $11$ independent variables, which are all chemical components with their levels easily measured by devices. They are all continuous variables, and we shall standardize by their means and standard deviations so that the design criteria could be validly adopted in the sampling process. The quality of each wine will be rated by a wine expert by the integer scale from $0$ to $10$. There are $N=4898$ wines in the data and Table \ref{table6} below provides the frequency count of the quality ratings of wines. There are $7$ categories of response observed in the full data and the categories of $3$ and $9$ are very rare. As a result, there is a chance that some categories may not be observed in the sample. This is not an issue since we will only be interested in $\beta$.

\begin{table}[h]
\begin{center}
\begin{tabular}{|c|c|c|c|c|c|c|c|}
\hline
Quality  &  3 &   4 &   5  &  6 &   7  &  8 &   9\\\hline
Frequency   &  20  &163 &1457& 2198 & 880 & 175 &   5\\\hline
\end{tabular}
\end{center}
\caption{The frequency count of the quality ratings of the wines.}\label{table6}
\end{table}

We use this data set only to illustrate our method instead of making any scientific discoveries, hence we will pretend the response to be unobserved before sampling. To implement the algorithm, we need to calculate the information matrix of each single candidate point, which relies on the values of the parameters. It is natural to propose a two stage procedure of sampling. The first stage consists of a simple random sample based on which the parameters will be estimated. The second stage tries to select the rest number of points so that the combined sample is the most informative. Suppose MSE is the primary target, we shall choose A-criterion as the measurement of information. Another issue is the size allocation of the samples from the two stages. If the percentage of the first stage sample, denoted by $r$, is too large, the estimate of the parameters will be reliable, however, there is less room to improve the quality of the sample on the second stage. On the other hand, if $r$ is too small, the estimate of the parameters is subject to large variability which induces further doubts on the determination of the second stage sample. There is a trade off in the choice of $r$ for balancing these two issues.

Since the data generation mechanism is unknown, we shall evaluate the performance of the sampling methods through bootstrapping, where a big number, say $B$, of datasets are generated by resampling from the original data with replacement. A model fitted to the original data is treated as the true model and the algorithm will be applied to to each resampled data to derive the corresponding sample. The performance of a sampling method is evaluated by the total MSE of all components of $\beta$ as defined by the quantity $B^{-1}\sum^B_{b=1}||\hat{\beta}^b-\hat{\beta}||^2$, where $\hat{\beta}$ is the estimate of $\beta$ from the original data and $\hat{\beta}^b$ is the estimate of $\beta$ based on the sample derived from the $b$-th bootstrapped data. By a regular data analysis procedure, we find that model (\ref{eqn:723}) is well fitted to the original data with $11$ main effects and $17$ two way interactions as the regressors. That means $\beta$ has $28$ components. Further, we set the combined sample size to be $n=500$ and the number of iterations to be $B=1000$.

Figure \ref{fig:MSE1} gives the comparison between the derived sample and the random sample in terms of the ratio of the total MSE as defined earlier when $r$ takes the values of $0.2,0.3,...,0.9$. The smaller the ratio, the more advantage our designed sample is as compared to the random sample. The value $r=0.1$ is not included since the ratio is slightly larger than $1$ and hence excluded from consideration. The smallest ratio as $0.4570$ is obtained when $r=0.4$. We also calculate the ratio of MSE for each component of $\beta$ when $r=0.4$ and draw a histogram of these values in Figure \ref{fig:hist}. It can been seen that only $3$ out of $28$ components have the ratio larger than one. The rest is smaller than one with the majority below $0.6$.

\begin{figure}[H]
\centering
\includegraphics[scale=0.3]{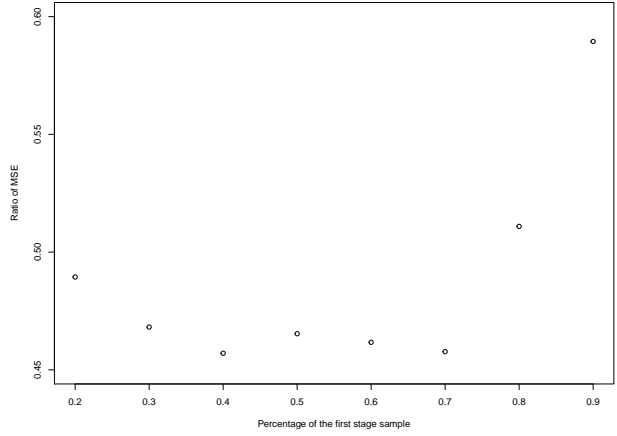}
\caption{The ratio of total MSE of all parameters between the two stage designed sample and the random sample. Here the sample size is $500$ and the data has $4898$ observations. The lowest ratio is $0.4570$ when the percentage of the first stage sample is $40\%$.}\label{fig:MSE1}
\end{figure}

\begin{figure}[H]
\centering
\includegraphics[scale=0.3]{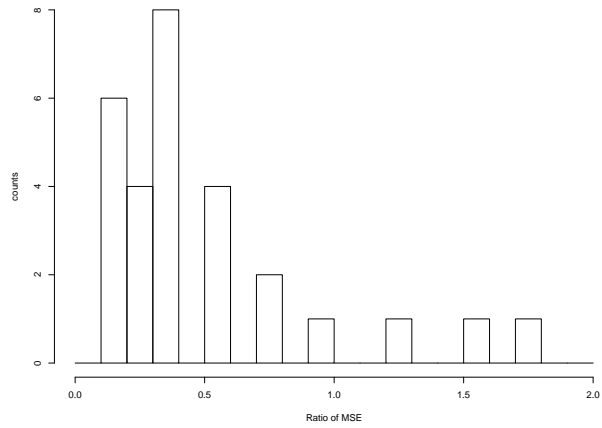}
\caption{The histogram of ratios of MSE for all parameters between the two stage designed sample and the random sample. The sample size is $500$ and the first stage sample in the two stage designed sample occupies $40\%$ of the total sample, namely of size $200$.}\label{fig:hist}
\end{figure}

Similar to the analysis of wine quality data, we apply another data set from the US National Health and Nutrition Examination Survey (NHANES) between 2007 and 2014 \citep{patel2016database}, a nationally representative, cross-sectional biannual health survey, which included health questionnaire, laboratory (i.e. urinary mercury) and clinical data. The gout as our target event, was collected by self-reported questionnaire: "Has a doctor or other health professional ever told you that you had gout?" Answering yes to this question was identified positive for Gout. We remove the factors that differentiated among individuals and those that have a majority (90\%) of the observations below a detection limit threshold as defined in the NHANES codebook. Also, we discard the factors with missing values and the most important factors are determined using random forest and known experience from clinical professionals. As a result, there are a total of 8 factors with 16923 participants.    

The factors are continuous, and standardized by their means and standard deviations, which are uric acid (umol/L), red cell distribution width (\%), mean cell volume (fL), alanine aminotransferase ALT (IU/L), age in years of the participant at the time of screening, waist circumference (cm), blood pressure (mm Hg), and blood urea nitrogen (mg/dL). We follow the two-stage process, where the first stage is a simple random sampling the original subset data by $r$ between 0.3 and 0.9 to estimate the parameters in a logistic model. The second stage is to select the remaining samples using hybrid algorithm for D-criterion. The samples selected from two stages form to the different sizes, in which we choose the sample sizes with 1500, 2250, 3000, 3750 and 4500 in total. After 1000 iteration bootstrapping, we obtain the ratio of total MSE of all parameters between the two stage designed sample and the random sample sizes from 1500 to 4500, respectively (Figure \ref{fig:MSE2}). 

\begin{figure}[H]
\centering
\includegraphics[scale=0.5]{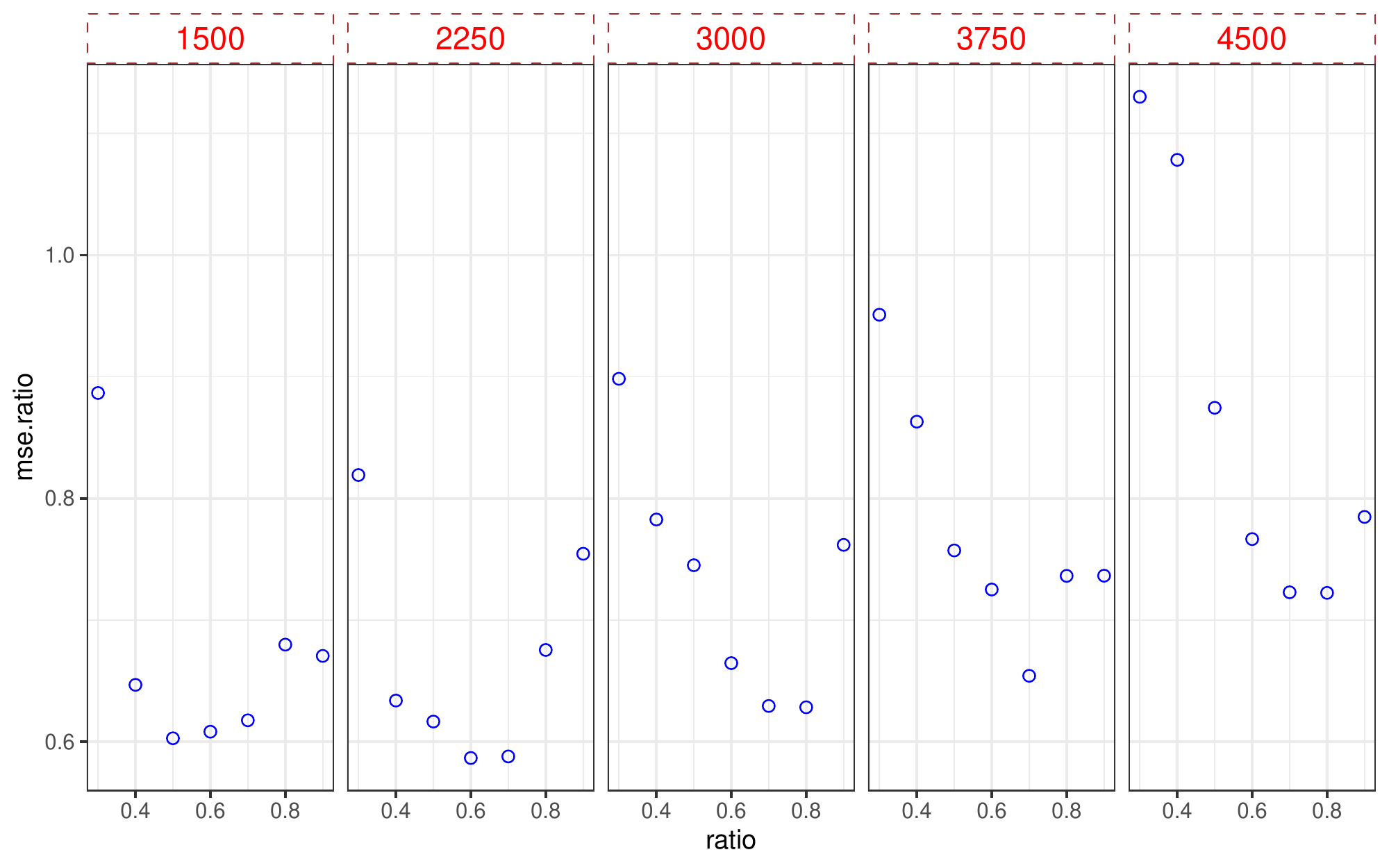}
\caption{The ratio of total MSE of all parameters between the two stage designed sample and the random sample. Here the majority of ratio is less than 1, except that $r$ takes values of 0.3 and 0.4 with total sample size 4500.}\label{fig:MSE2}
\end{figure}
 
Figure \ref{fig:MSE3} shows the histogram of the ratio of MSE between the two stage designed sample and the random sample for each component in our logistic model. 

\begin{figure}[H]
\centering
\includegraphics[scale=0.5]{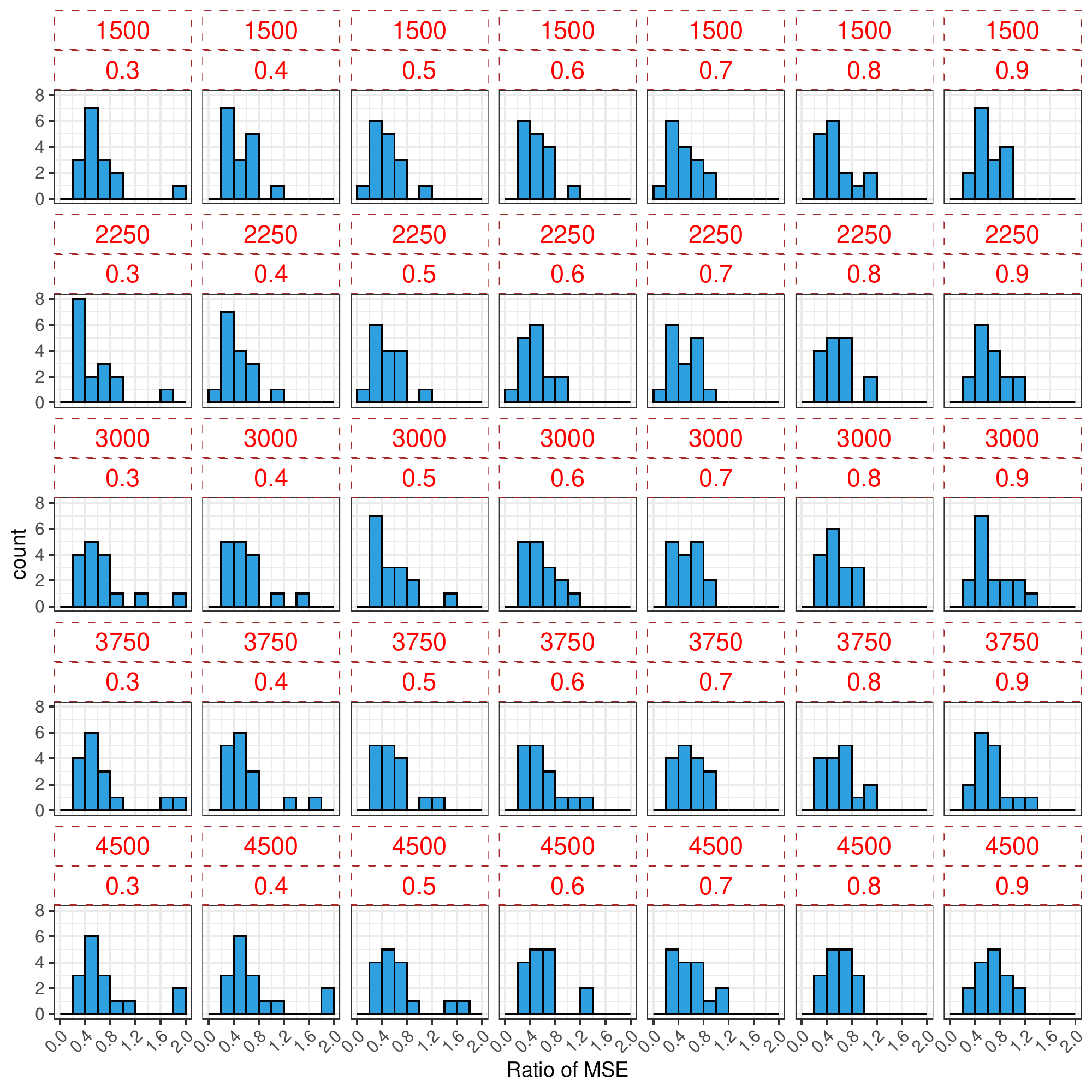}
\caption{The ratio of total MSE of each parameters between the two stage designed sample and the random sample. Here there is 8.57\% component whose the ratio of total MSE is larger than 1.}\label{fig:MSE3}
\end{figure}
          
To distinguish the selection bias, we calculate the ratio of MSE for each component including only variance. Figure \ref{fig:MSE4} represents the histogram of the ratio of MSE between the two stage designed sample and the random sample for each component incorporating variance. There is only the intercept whose the ratio of MSE for variance is larger than 1. 

\begin{figure}[H]
\centering
\includegraphics[scale=0.5]{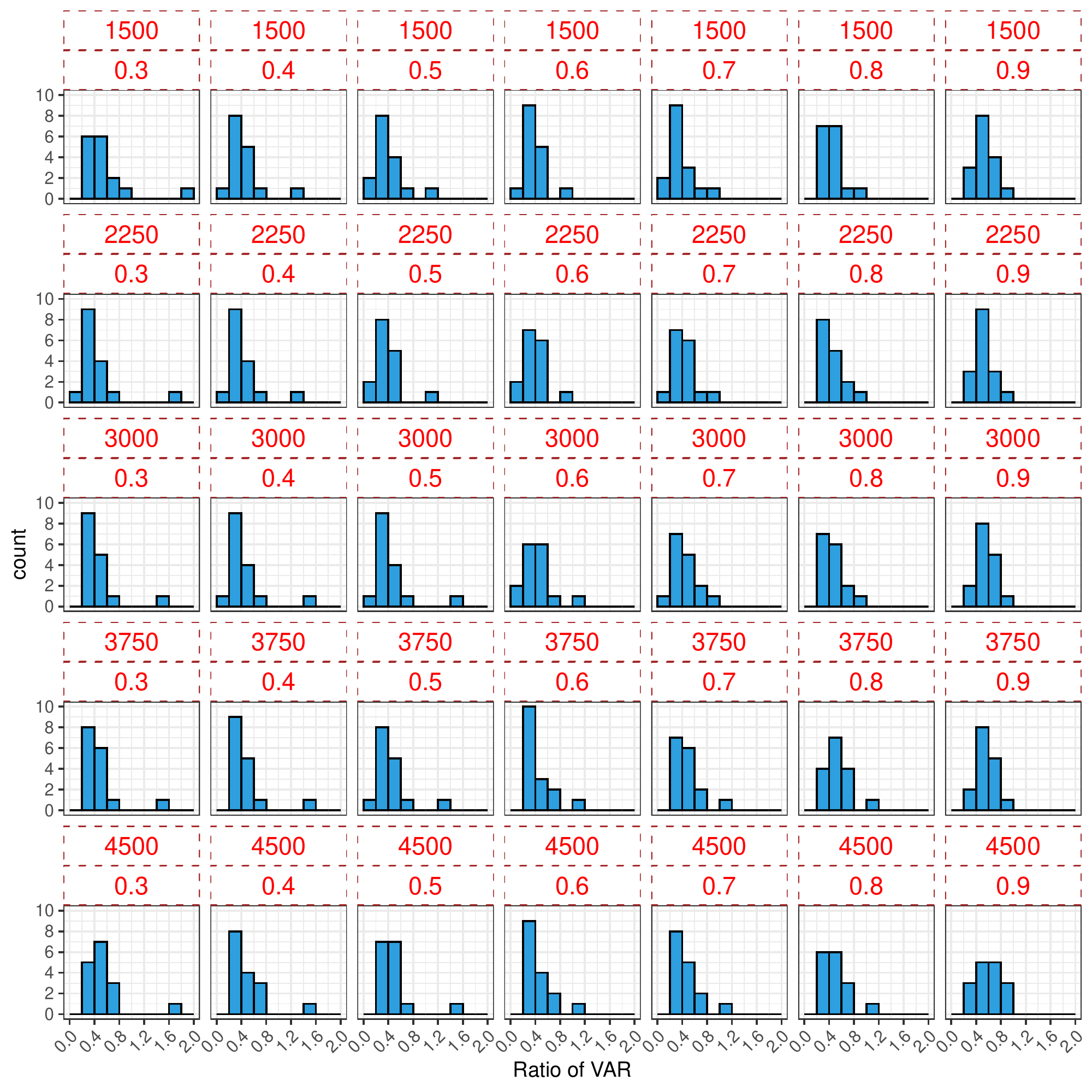}
\caption{The ratio of MSE of each parameters between the two stage designed sample and the random sample. Here there is 3.9\% component whose the ratio of MSE for variance is larger than 1.}\label{fig:MSE4}
\end{figure}


%
%
%


\bibliographystyle{plainnat} 
\bibliography{mybibtex} 

\begin{thebibliography}{45}
\providecommand{\natexlab}[1]{#1}
\providecommand{\url}[1]{\texttt{#1}}
\expandafter\ifx\csname urlstyle\endcsname\relax
  \providecommand{\doi}[1]{doi: #1}\else
  \providecommand{\doi}{doi: \begingroup \urlstyle{rm}\Url}\fi

\bibitem[Brinker(2003)]{brinker2003incorporating}
Klaus Brinker.
\newblock Incorporating diversity in active learning with support vector
  machines.
\newblock In \emph{Proceedings of the 20th International Conference on Machine
  Learning (ICML-03)}, pages 59--66, 2003.

\bibitem[Campbell et~al.(2000)Campbell, Cristianini, Smola,
  et~al.]{campbell2000query}
Colin Campbell, Nello Cristianini, Alex Smola, et~al.
\newblock Query learning with large margin classifiers.
\newblock In \emph{ICML}, volume~20, 2000.

\bibitem[Chen and Krause(2013)]{chen2013near}
Yuxin Chen and Andreas Krause.
\newblock Near-optimal batch mode active learning and adaptive submodular
  optimization.
\newblock In \emph{International Conference on Machine Learning}, pages
  160--168. PMLR, 2013.

\bibitem[Deng et~al.(2009)Deng, Joseph, Sudjianto, and Wu]{deng2009active}
Xinwei Deng, V~Roshan Joseph, Agus Sudjianto, and CF~Jeff Wu.
\newblock Active learning through sequential design, with applications to
  detection of money laundering.
\newblock \emph{Journal of the American Statistical Association}, 104\penalty0
  (487):\penalty0 969--981, 2009.

\bibitem[Deng et~al.(2022)Deng, Yuan, Fu, and Qu]{deng2022query}
Yujia Deng, Yubai Yuan, Haoda Fu, and Annie Qu.
\newblock Query-augmented active metric learning.
\newblock \emph{Journal of the American Statistical Association}, pages 1--14,
  2022.

\bibitem[Fedorov(1989)]{fedorov1989optimal}
Valerii~V Fedorov.
\newblock Optimal design with bounded density: Optimization algorithms of the
  exchange type.
\newblock \emph{Journal of Statistical Planning and Inference}, 22\penalty0
  (1):\penalty0 1--13, 1989.

\bibitem[Freund et~al.(1997)Freund, Seung, Shamir, and
  Tishby]{freund1997selective}
Yoav Freund, H~Sebastian Seung, Eli Shamir, and Naftali Tishby.
\newblock Selective sampling using the query by committee algorithm.
\newblock \emph{Machine Learning}, 28\penalty0 (2):\penalty0 133--168, 1997.

\bibitem[Ghorbani et~al.(2021)Ghorbani, Zou, and Esteva]{ghorbani2021data}
Amirata Ghorbani, James Zou, and Andre Esteva.
\newblock Data shapley valuation for efficient batch active learning.
\newblock \emph{arXiv preprint arXiv:2104.08312}, 2021.

\bibitem[Guo and Schuurmans(2007)]{guo2007discriminative}
Yuhong Guo and Dale Schuurmans.
\newblock Discriminative batch mode active learning.
\newblock \emph{Advances in Neural Information Processing Systems}, 20, 2007.

\bibitem[Hoi et~al.(2006{\natexlab{a}})Hoi, Jin, and Lyu]{hoi2006large}
Steven~CH Hoi, Rong Jin, and Michael~R Lyu.
\newblock Large-scale text categorization by batch mode active learning.
\newblock In \emph{Proceedings of the 15th International Conference on World
  Wide Web}, pages 633--642, 2006{\natexlab{a}}.

\bibitem[Hoi et~al.(2006{\natexlab{b}})Hoi, Jin, Zhu, and Lyu]{hoi2006batch}
Steven~CH Hoi, Rong Jin, Jianke Zhu, and Michael~R Lyu.
\newblock Batch mode active learning and its application to medical image
  classification.
\newblock In \emph{Proceedings of the 23rd International Conference on Machine
  Learning}, pages 417--424, 2006{\natexlab{b}}.

\bibitem[Houlsby et~al.(2011)Houlsby, Husz{\'a}r, Ghahramani, and
  Lengyel]{houlsby2011bayesian}
Neil Houlsby, Ferenc Husz{\'a}r, Zoubin Ghahramani, and M{\'a}t{\'e} Lengyel.
\newblock Bayesian active learning for classification and preference learning.
\newblock \emph{arXiv preprint arXiv:1112.5745}, 2011.

\bibitem[Ienco et~al.(2013)Ienco, Bifet, {\v{Z}}liobait{\.e}, and
  Pfahringer]{ienco2013clustering}
Dino Ienco, Albert Bifet, Indr{\.e} {\v{Z}}liobait{\.e}, and Bernhard
  Pfahringer.
\newblock Clustering based active learning for evolving data streams.
\newblock In \emph{International Conference on Discovery Science}, pages
  79--93. Springer, 2013.

\bibitem[Kiefer(1974)]{kiefer1974general}
Jack Kiefer.
\newblock General equivalence theory for optimum designs (approximate theory).
\newblock \emph{The Annals of Statistics}, pages 849--879, 1974.

\bibitem[Kirsch et~al.(2019)Kirsch, Van~Amersfoort, and
  Gal]{kirsch2019batchbald}
Andreas Kirsch, Joost Van~Amersfoort, and Yarin Gal.
\newblock Batchbald: Efficient and diverse batch acquisition for deep bayesian
  active learning.
\newblock \emph{Advances in Neural Information Processing Systems}, 32, 2019.

\bibitem[Lewis and Gale(1994)]{lewis1994sequential}
David~D Lewis and William~A Gale.
\newblock A sequential algorithm for training text classifiers.
\newblock In \emph{SIGIR'94}, pages 3--12. Springer, 1994.

\bibitem[Li et~al.(2012)Li, Kuang, and Ling]{li2012active}
Xiao Li, Da~Kuang, and Charles~X Ling.
\newblock Active learning for hierarchical text classification.
\newblock In \emph{Pacific-Asia Conference on Knowledge Discovery and Data
  Mining}, pages 14--25. Springer, 2012.

\bibitem[M{\"u}ller and P{\'a}zman(1998)]{muller1998design}
Werner~G M{\"u}ller and Andrej P{\'a}zman.
\newblock Design measures and approximate information matrices for experiments
  without replications.
\newblock \emph{Journal of Statistical Planning and Inference}, 71\penalty0
  (1-2):\penalty0 349--362, 1998.

\bibitem[Ouyang et~al.(2016{\natexlab{a}})Ouyang, Apley, and
  Mehrotra]{ouyang2016design}
Liwen Ouyang, Daniel~W Apley, and Sanjay Mehrotra.
\newblock A design of experiments approach to validation sampling for logistic
  regression modeling with error-prone medical records.
\newblock \emph{Journal of the American Medical Informatics Association},
  23\penalty0 (e1):\penalty0 e71--e78, 2016{\natexlab{a}}.

\bibitem[Ouyang et~al.(2016{\natexlab{b}})Ouyang, Apley, and
  Mehrotra]{ouyang2016designed}
Liwen Ouyang, Daniel~W Apley, and Sanjay Mehrotra.
\newblock Designed sampling from large databases for controlled trials.
\newblock \emph{IIE Transactions}, 48\penalty0 (12):\penalty0 1087--1097,
  2016{\natexlab{b}}.

\bibitem[Ouyang et~al.(2017)Ouyang, Apley, and Mehrotra]{ouyang2017batch}
Liwen Ouyang, Daniel~W Apley, and Sanjay Mehrotra.
\newblock Batch sample design from databases for logistic regression.
\newblock \emph{Quality and Reliability Engineering International}, 33\penalty0
  (1):\penalty0 87--101, 2017.

\bibitem[Patel et~al.(2016)Patel, Pho, McDuffie, Easton-Marks, Kothari, Kohane,
  and Avillach]{patel2016database}
Chirag~J Patel, Nam Pho, Michael McDuffie, Jeremy Easton-Marks, Cartik Kothari,
  Isaac~S Kohane, and Paul Avillach.
\newblock A database of human exposomes and phenomes from the us national
  health and nutrition examination survey.
\newblock \emph{Scientific Data}, 3\penalty0 (1):\penalty0 1--10, 2016.

\bibitem[Perevozskaya et~al.(2003)Perevozskaya, Rosenberger, and
  Haines]{perevozskaya2003optimal}
Inna Perevozskaya, William~F Rosenberger, and Linda~M Haines.
\newblock Optimal design for the proportional odds model.
\newblock \emph{Canadian Journal of Statistics}, 31\penalty0 (2):\penalty0
  225--235, 2003.

\bibitem[Pronzato(2004)]{pronzato2004minimax}
Luc Pronzato.
\newblock A minimax equivalence theorem for optimum bounded design measures.
\newblock \emph{Statistics \& Probability Letters}, 68\penalty0 (4):\penalty0
  325--331, 2004.

\bibitem[Pronzato(2006)]{pronzato2006sequential}
Luc Pronzato.
\newblock On the sequential construction of optimum bounded designs.
\newblock \emph{Journal of Statistical Planning and Inference}, 136\penalty0
  (8):\penalty0 2783--2804, 2006.

\bibitem[Ravi et~al.(2016)Ravi, Ithapu, Johnson, and
  Singh]{ravi2016experimental}
Sathya~Narayanan Ravi, Vamsi Ithapu, Sterling Johnson, and Vikas Singh.
\newblock Experimental design on a budget for sparse linear models and
  applications.
\newblock In \emph{International Conference on Machine Learning}, pages
  583--592. PMLR, 2016.

\bibitem[Ren et~al.(2021)Ren, Xiao, Chang, Huang, Li, Gupta, Chen, and
  Wang]{ren2021survey}
Pengzhen Ren, Yun Xiao, Xiaojun Chang, Po-Yao Huang, Zhihui Li, Brij~B Gupta,
  Xiaojiang Chen, and Xin Wang.
\newblock A survey of deep active learning.
\newblock \emph{ACM Computing Surveys (CSUR)}, 54\penalty0 (9):\penalty0 1--40,
  2021.

\bibitem[Sahm and Schwabe(2001)]{sahm2001note}
Michael Sahm and Rainer Schwabe.
\newblock A note on optimal bounded designs.
\newblock In \emph{Optimum Design 2000}, pages 131--140. Springer, 2001.

\bibitem[Scheffer et~al.(2001)Scheffer, Decomain, and
  Wrobel]{scheffer2001active}
Tobias Scheffer, Christian Decomain, and Stefan Wrobel.
\newblock Active hidden markov models for information extraction.
\newblock In \emph{International Symposium on Intelligent Data Analysis}, pages
  309--318. Springer, 2001.

\bibitem[Schohn and Cohn(2000)]{schohn2000less}
Greg Schohn and David Cohn.
\newblock Less is more: Active learning with support vector machines.
\newblock In \emph{ICML}, volume~2, page~6. Citeseer, 2000.

\bibitem[Sebastiani and Wynn(2000)]{sebastiani2000maximum}
Paola Sebastiani and Henry~P Wynn.
\newblock Maximum entropy sampling and optimal bayesian experimental design.
\newblock \emph{Journal of the Royal Statistical Society: Series B (Statistical
  Methodology)}, 62\penalty0 (1):\penalty0 145--157, 2000.

\bibitem[Sener and Savarese(2018)]{sener2017active}
Ozan Sener and Silvio Savarese.
\newblock Active learning for convolutional neural networks: A core-set
  approach.
\newblock In \emph{International Conference on Learning Representations}, 2018.

\bibitem[Settles(2009)]{settles2009active}
Burr Settles.
\newblock In \emph{Active learning literature survey}. University of
  Wisconsin-Madison Department of Computer Sciences, 2009.

\bibitem[Tong and Koller(2001)]{tong2001support}
Simon Tong and Daphne Koller.
\newblock Support vector machine active learning with applications to text
  classification.
\newblock \emph{Journal of Machine Learning Research}, 2:\penalty0 45--66,
  2001.

\bibitem[Uci{\'n}ski(2015)]{ucinski2015algorithm}
Dariusz Uci{\'n}ski.
\newblock An algorithm for construction of constrained d-optimum designs.
\newblock In \emph{Stochastic Models, Statistics and Their Applications}, pages
  461--468. Springer, 2015.

\bibitem[Vandenberghe et~al.(1998)Vandenberghe, Boyd, and
  Wu]{vandenberghe1998determinant}
Lieven Vandenberghe, Stephen Boyd, and Shao-Po Wu.
\newblock Determinant maximization with linear matrix inequality constraints.
\newblock \emph{SIAM Journal on Matrix Analysis and Applications}, 19\penalty0
  (2):\penalty0 499--533, 1998.

\bibitem[Wang and Hua(2011)]{wang2011active}
Meng Wang and Xian-Sheng Hua.
\newblock Active learning in multimedia annotation and retrieval: A survey.
\newblock \emph{ACM Transactions on Intelligent Systems and Technology (TIST)},
  2\penalty0 (2):\penalty0 1--21, 2011.

\bibitem[Wei et~al.(2015)Wei, Iyer, and Bilmes]{wei2015submodularity}
Kai Wei, Rishabh Iyer, and Jeff Bilmes.
\newblock Submodularity in data subset selection and active learning.
\newblock In \emph{International Conference on Machine Learning}, pages
  1954--1963. PMLR, 2015.

\bibitem[Wynn(1977)]{Wynn1977}
H.P. Wynn.
\newblock Optimum designs for finite populations sampling.
\newblock In \emph{Statistical Decision Theory and Related Topics II}, pages
  471--478. Academic Press, New York, 1977.

\bibitem[Wynn(1982)]{wynn1982optimum}
H.P. Wynn.
\newblock Optimum submeasures with application to finite population sampling.
\newblock In \emph{Statistical Decision Theory and Related Topics III}, pages
  485--495. Elsevier, 1982.

\bibitem[Xu et~al.(2003)Xu, Yu, Tresp, Xu, and Wang]{xu2003representative}
Zhao Xu, Kai Yu, Volker Tresp, Xiaowei Xu, and Jizhi Wang.
\newblock Representative sampling for text classification using support vector
  machines.
\newblock In \emph{European Conference on Information Retrieval}, pages
  393--407. Springer, 2003.

\bibitem[Yang et~al.(2017)Yang, Tong, and Mandal]{yang2017d}
Jie Yang, Liping Tong, and Abhyuday Mandal.
\newblock D-optimal designs with ordered categorical data.
\newblock \emph{Statistica Sinica}, pages 1879--1902, 2017.

\bibitem[Yang et~al.(2013)Yang, Biedermann, and Tang]{yang2013optimal}
Min Yang, Stefanie Biedermann, and Elina Tang.
\newblock On optimal designs for nonlinear models: a general and efficient
  algorithm.
\newblock \emph{Journal of the American Statistical Association}, 108\penalty0
  (504):\penalty0 1411--1420, 2013.

\bibitem[Zhang and Oles(2000)]{zhang2000value}
Tong Zhang and F~Oles.
\newblock The value of unlabeled data for classification problems.
\newblock In \emph{Proceedings of the 17th International Conference on Machine
  Learning,(Langley, P., ed.)}, volume~20. Citeseer, 2000.

\bibitem[Zhdanov(2019)]{zhdanov2019diverse}
Fedor Zhdanov.
\newblock Diverse mini-batch active learning.
\newblock \emph{arXiv preprint arXiv:1901.05954}, 2019.

\end{thebibliography}


\end{document}